\theoremstyle{definition}
\newtheorem{defn}{Definition}[section]
\newtheorem{thm}[defn]{Theorem}
\newtheorem{tvr}[defn]{Proposition}
\newtheorem{cor}[defn]{Corollary}
\theoremstyle{remark}
\newtheorem{example}{Example}[section]
\newcommand{\id}{\mathfrak{1}}
\newlength{\defbaselineskip}
\newcommand{\setlinespacing}[1]%
           {\setlength{\baselineskip}{#1 \defbaselineskip}}
\renewcommand{\i}{\mathrm{i}}
\newcommand{\map}{\rightarrow}
\newcommand{\q}{\quad}
\renewcommand{\epsilon}{\varepsilon}
\newcommand{\ep}{\varepsilon}
\newcommand{\la}{\lambda}
\newcommand{\al}{\alpha}
\newcommand{\om}{\omega}
\renewcommand{\rho}{\varrho}
\renewcommand{\phi}{\varphi}
\newcommand{\R}{{\mathbb{R}}}
\newcommand{\N}{{\mathbb N}}
\newcommand{\Com}{{\mathbb C}}
\newcommand{\Z}{\mathbb{Z}}
\newcommand{\C}{\mathbb{C}}
\newcommand{\set}[2]{\left\{#1 \, |\, #2 \right\}}
\newcommand{\setb}[2]{\left\{#1 \, \mid\, #2 \right\}}
\newcommand{\setm}[2]{\left\{#1 \,\, \big|\,\, #2 \right\}}
\newcommand{\abs}[1]{\left\vert#1\right\vert}
\newcommand{\sca}[2]{\langle #1,\, #2\rangle}
\newcommand{\comb}[2]{\begin{pmatrix}
     #1\\
     #2
  \end{pmatrix}}
\begin{document}

\title[Torus discretization II]
{On Discretization of Tori of Compact Simple Lie Groups II}

\author{Ji\v{r}\'{i} Hrivn\'{a}k$^{1}$}
\author{Lenka Motlochov\'{a}$^{3}$}
\author{Ji\v{r}\'{i} Patera$^{2,3,4}$}

\date{\today}
\begin{abstract}\
The discrete orthogonality of special function families, called $C$- and $S$-functions, which are derived from the characters of compact simple Lie groups, is described in \cite{HP}. Here, the results of \cite{HP} are extended to two additional recently discovered families of special functions, called $S^s-$ and $S^l-$functions. The main result is an explicit description of their pairwise discrete orthogonality within each family, when the functions are sampled on finite fragments $F^s_M$ and $F^l_M$ of a lattice in any dimension $n\geq2$ and of any density controlled by $M$, and of the symmetry of the weight lattice of any compact simple Lie group with two different lengths of roots.
\end{abstract}

\maketitle
\noindent
$^1$ Department of physics, Faculty of nuclear sciences and physical engineering, Czech Technical University in Prague, B\v{r}ehov\'a~7, CZ-115 19 Prague, Czech Republic\\
$^2$ Centre de recherches math\'ematiques, Universit\'e de Montr\'eal, C.~P.~6128 -- Centre ville, Montr\'eal, H3C\,3J7, Qu\'ebec, Canada\\
$^3$ D\'epartement de Math\'ematiques et de Statistique, Universit\'e de Montr\'eal, Qu\'ebec, Canada;\\
$^4$ MIND Research Institute, 3631 S. Harbor Blvd., Suite 200, Santa Ana, CA 92704, USA
\vspace{10pt}

\noindent
E-mail: jiri.hrivnak@fjfi.cvut.cz, motlochova@dms.umontreal.ca, patera@crm.umontreal.ca


\section{Introduction}
This paper focuses on the Fourier transform of data sampled on lattices of any dimension and any symmetry \cite{MP2,P}. The main problem is to find families of expansion functions that are complete in their space and orthogonal over finite fragments of the lattices. Generality of results is possible because the expansion function is built using properties that are uniformly valid over the series of semisimple Lie groups. Results of \cite{HP} on the discrete orthogonality of $C-$ and $S-$functions of compact simple Lie groups are extended to the recently discovered families of $S^s-$ and $S^l-$functions. The new families of functions add new possibilities of transforms for the same data.

Uniform discretization of tori of all semisimple Lie groups became possible after the classification of conjugacy classes of elements of finite order in compact simple Lie groups \cite{Kac}. This was accomplished in \cite{MP3,MP1} for $C-$functions and extended to $S-$functions in \cite{MP2}. Functions of  $C-$ and $S-$families are ingredients of irreducible characters of representations. They are uniformly defined for all semisimple Lie groups. Discretization here refers to their orthogonality when sampled on a fraction of a lattice $F_M$ in the fundamental region $F$ of the corresponding Lie group and summed up over all lattice points in $F_M$. This lattice is necessarily isomorphic to the weight lattice of the underlying Lie group, but its density is controlled by the choice of $M\in\N$.

$C-$functions are Weyl group invariant constituents of characters of irreducible representations. They are well known, even if unfrequently used \cite{KP1}. $S-$functions appear in the Weyl character formula. They are skew-invariant with respect to the Weyl group \cite{KP2}. In the new families of $S^s-$ and $S^l-$functions, the Weyl group acts differently when reflections are, with respect to hyperplanes, orthogonal to short and long roots of the Lie group. The functions are `half invariant and half skew-invariant' under the action of the Weyl group.

The key point of the discretization of $S^s-$ and $S^l-$functions lies in finding appropriate subsets $F_M^s\subset F_M$ and $F_M^l\subset F_M$, which play the role of sampling points of a given data. The solution involves determining the sets of weights $\Lambda_M^s$ and $\Lambda_M^l$, which label the discretely orthogonal $S^s-$ and $S^l-$functions over the sets $F_M^s$ and $F_M^l$. In order to verify the completeness of the found sets of functions, the last step involves comparing the number of points in $F_M^s$, $F_M^l$ to the number of weights in $\Lambda_M^s$, $\Lambda_M^l$.

The pertinent standard properties of affine Weyl groups and their dual versions are recalled in section~2. Two types of sign homomorphisms and the corresponding fundamental domains are defined in section~3. The $S^s-$ and $S^l-$functions and their behavior on the given discrete grids are studied in section~4. In section~5, the number of points in $F_M^s$, $F_M^l$ are shown to be equal to the number of weights in $\Lambda_M^s$, $\Lambda_M^l$. Explicit formulas for these numbers are also given. Section~6 contains the detailed description of the discrete orthogonality and discrete transforms of $S^s-$ and $S^l-$functions. Comments and follow-up questions are in the last section.

\
\section{Pertinent properties of affine Weyl groups}

\subsection{Roots and reflections}\

We use the notation established in \cite{HP}. Recall that, to the Lie algebra of the compact simple Lie group $G$ of rank $n$, corresponds the set of simple roots $\Delta=\{\al_1,\dots,\al_n\}$ \cite{BB,H2,VO}. The set $\Delta$ spans the Euclidean space $\R^n$, with the scalar product denoted by $\sca{\,}{\,}$. We consider here only simple algebras with two different lengths of roots, namely $B_n,\, n\geq 3$, $C_n,\, n\geq 2$, $G_2$ and $F_4$. For these algebras, the set of simple roots consists of short simple roots $\Delta_s$ and long simple roots $\Delta_l$. Thus, we have the disjoint decomposition
\begin{equation}\label{sl}
\Delta=\Delta_s\cup\Delta_l.
\end{equation}
We then use the following well-known objects related to the set $\Delta$:
\begin{itemize}
\item
The marks $m_1,\dots,m_n$ of the highest root $\xi\equiv -\al_0=m_1\al_1+\dots+m_n\al_n$.
\item
The Coxeter number $m=1+m_1+\dots+m_n$ of $G$.
\item
The Cartan matrix $C$ and its determinant
\begin{equation}\label{Center}
 c=\det C.
\end{equation}
\item
The root lattice $Q=\Z\al_1+\dots+\Z\al_n $.

\item
The $\Z$-dual lattice to $Q$,
\begin{equation*}
 P^{\vee}=\set{\om^{\vee}\in \R^n}{\sca{\om^{\vee}}{\al}\in\Z,\, \forall \al \in \Delta}=\Z \om_1^{\vee}+\dots +\Z \om_n^{\vee}\,.
 \end{equation*}
\item
The dual root lattice $Q^{\vee}=\Z \al_1^{\vee}+\dots +\Z \al^{\vee}_n$, where $\al^{\vee}_i=2\al_i/\sca{\al_i}{\al_i}$.
\item
The dual marks $m^{\vee}_1, \dots ,m^{\vee}_n$ of the highest dual root $\eta\equiv -\al_0^{\vee}= m_1^{\vee}\al_1^{\vee} + \dots + m_n^{\vee} \al_n^{\vee}$. The marks and the dual marks are summarized in Table 1 in \cite{HP}.
\item The $\Z$-dual lattice to $Q^\vee$
\begin{equation*}
 P=\set{\om\in \R^n}{\sca{\om}{\al^{\vee}}\in\Z,\, \forall \al^{\vee} \in Q^\vee}=\Z \om_1+\dots +\Z \om_n.
\end{equation*}
\end{itemize}

Recall that $n$ reflections $r_\al$, $\al\in\Delta$ in $(n-1)$-dimensional `mirrors' orthogonal to simple roots intersecting at the origin are given explicitly by
\begin{equation*}
r_{\al}a=a-\frac{2\sca{a}{\al} }{\sca{\al}{\al}}\al\,,
\qquad a\in\R^n\,.
\end{equation*}
The affine reflection $r_0$ with respect to the highest root $\xi$ is given by
\begin{equation*}
r_0 a=r_\xi a + \frac{2\xi}{\sca{\xi}{\xi}}\,,\qquad
r_{\xi}a=a-\frac{2\sca{a}{\xi} }{\sca{\xi}{\xi}}\xi\,,\qquad a\in\R^n\,.
\end{equation*}

We denote the set of reflections $r_1\equiv r_{\al_1}, \, \dots, r_n\equiv r_{\al_n}$, together with the affine reflection $r_0$, by $R$, i.e.
\begin{equation}\label{R}
R=\{ r_0,r_1,\dots,r_n \}.
\end{equation}
Analogously to (\ref{sl}), we divide the reflections of $R$ into two subsets
\begin{align*}
R^s&=\set{r_\al}{\al\in\Delta_s }\\
R^l&=\set{r_\al}{\al\in\Delta_l }\cup\{r_0\}.
\end{align*}
We then obtain the disjoint decomposition
\begin{equation}\label{slaff}
R=R^s\cup R^l.
\end{equation}
We can also call the sums of marks corresponding to the long or short roots, i.e. the numbers
\begin{align*}
m^s&=\sum_{\al_i\in \Delta_s}m_i \\
m^l&=\sum_{\al_i\in \Delta_l}m_i+1,
\end{align*}
the short and the long Coxeter numbers. Their sum gives the Coxeter number $m=m^s+m^l$.

The dual affine reflection $r_0^{\vee}$, with respect to the dual highest root $\eta$, is given by
\begin{equation*}
r_0^{\vee} a=r_{\eta} a + \frac{2\eta}{\sca{\eta}{\eta}}, \q r_{\eta}a=a-\frac{2\sca{a}{\eta} }{\sca{\eta}{\eta}}\eta,\q a\in\R^n.
\end{equation*}
We denote the set of reflections $r^\vee_1\equiv r_{\al_1}, \, \dots, r^\vee_n\equiv r_{\al_n}$, together with the dual affine reflection $r^\vee_0$, by $R^\vee$, i.e.
\begin{equation}\label{Rd}
R^{\vee}=\{ r_0^{\vee},r^\vee_1,\dots,r^\vee_n \}.
\end{equation}
Analogously to (\ref{sl}), (\ref{slaff}), we divide the reflections of $R^\vee$ into two subsets
\begin{align*}
R^{ s\vee}&=\set{r_\al}{\al\in\Delta_s }\cup\{r^\vee_0\}\\
R^{ l\vee}&=\set{r_\al}{\al\in\Delta_l }.
\end{align*}
The disjoint decomposition of $R^\vee$ is then
\begin{equation}\label{slaffdual}
R^\vee=R^{s\vee}\cup R^{l\vee}.
\end{equation}
We can also call the sums of the dual marks corresponding to generators from $R^{s\vee}$, $R^{l\vee}$ i.e. the numbers
\begin{align*}
m^{s\vee}&=\sum_{\al_i\in \Delta_s}m^\vee_i+1 \\
m^{l\vee}&=\sum_{\al_i\in \Delta_l}m^\vee_i,
\end{align*}
the short and the long dual Coxeter numbers. Again, their sum gives the Coxeter number $m=m^{s\vee}+m^{l\vee}$.
Direct calculation of these numbers yields the following crucial result:
\begin{tvr}\label{msml}
For the numbers $m^s$, $m^l$ and $m^{s\vee}$, $m^{l\vee}$, it holds that
\begin{equation}
m^s=m^{s\vee},\q m^l=m^{l\vee}.
\end{equation}
\end{tvr}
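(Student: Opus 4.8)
The plan is first to reduce the two stated equalities to one, then to verify that one by a short direct computation. Since $m^s+m^l=m=m^{s\vee}+m^{l\vee}$ (as recorded just before the statement), the identity $m^l=m^{l\vee}$ is equivalent to $m^s=m^{s\vee}$, so it suffices to prove, say,
\begin{equation*}
\sum_{\al_i\in\Delta_s}m_i \;=\; \sum_{\al_i\in\Delta_s}m_i^\vee+1 .
\end{equation*}

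Before computing, I would try to expose the geometry behind this. Let $\theta_s$ denote the highest short root of $\Delta$ and write $\theta_s=\sum_i k_i\al_i$ for its expansion in simple roots. The highest dual root $\eta$ is the maximal root of the dual system $\{\al_1^\vee,\dots,\al_n^\vee\}$; being long there, it is the largest among the coroots of the short roots of $\Delta$, and since all short roots have the same length the coroot map is order-preserving on them, so $\eta=\theta_s^\vee$. Expanding $\theta_s^\vee=2\theta_s/\sca{\theta_s}{\theta_s}$ in the basis $\{\al_i^\vee\}$ and using $\sca{\al_i}{\al_i}=\sca{\theta_s}{\theta_s}$ for $\al_i\in\Delta_s$ gives $m_i^\vee=k_i$ for every short simple root. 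Hence $m^{s\vee}=\sum_{\al_i\in\Delta_s}k_i+1$, and the desired identity is equivalent to $\sum_{\al_i\in\Delta_s}(m_i-k_i)=1$, i.e. the coefficients of the short simple roots in $\xi-\theta_s=\sum_i(m_i-k_i)\al_i$ add up to $1$.

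From here the cleanest route is the direct calculation announced in the statement. For each of $B_n\ (n\geq3)$, $C_n\ (n\geq2)$, $G_2$, $F_4$ one reads the marks $m_i$ and dual marks $m_i^\vee$ off Table~1 of \cite{HP}, partitions the indices by $\Delta=\Delta_s\cup\Delta_l$, and sums: one obtains $m^s=2(n-1)=m^{s\vee}$ and $m^l=2=m^{l\vee}$ for $C_n$; $m^s=2=m^{s\vee}$ and $m^l=2(n-1)=m^{l\vee}$ for $B_n$; all four numbers equal to $3$ for $G_2$; and all four equal to $6$ for $F_4$. Equivalently, in each case $\xi-\theta_s$ turns out to be multiplicity-free and to contain exactly one short simple root, which is the content of the reformulation above.

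The only real obstacle is that there appears to be no uniform, case-free argument for $\sum_{\al_i\in\Delta_s}(m_i-k_i)=1$: the statement genuinely reflects the individual shapes of the four extended Dynkin diagrams, so the proof ultimately reduces to the finite verification. I would therefore present the result as a small table of the four quadruples $(m^s,m^l,m^{s\vee},m^{l\vee})$, which also conveniently records the short and long Coxeter numbers used later in the paper.
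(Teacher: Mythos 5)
Your proposal is correct and follows essentially the same route as the paper, which simply states that ``direct calculation of these numbers yields'' the result and records the values $(m^s,m^l)$ case by case in Table~\ref{tabdec}; your quadruples agree with those entries. The two extras you add --- reducing to a single equality via $m^s+m^l=m=m^{s\vee}+m^{l\vee}$, and the reformulation $\eta=\theta_s^\vee$ turning the claim into $\sum_{\al_i\in\Delta_s}(m_i-k_i)=1$ --- are sound and slightly illuminating, but the verification remains the same finite check the paper performs.
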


The explicit form of decompositions \eqref{slaff} and \eqref{slaffdual} of sets $R$ and $R^\vee$ as well as Coxeter numbers $m^s$, $m^l$ is given in Table \ref{tabdec}.
{
\begin{table}
\begin{tabular}{|c||c|c|c|c|c|c|}
\hline
Type & $R^s$  & $R^l$ & $R^{s\vee}$ & $R^{l\vee}$ &  $m^s$ & $m^l$ \\
\hline\hline
$B_n\ (n\geq3)$ & $r_n$ & $r_0,\, r_1, \dots, r_{n-1}$ & $r^\vee_0,\, r^\vee_n$ & $r^\vee_1, \dots, r^\vee_{n-1}$ & $2$ & $2n-2$ \\ \hline
$C_n\ (n\geq2)$ & $r_1, \dots, r_{n-1}$ & $r_0,\,r_n$ & $r^\vee_0,r^\vee_1, \dots, r^\vee_{n-1}$ & $r^\vee_n$ & $2n-2$ & $2$ \\ \hline
$G_2$ & $r_2$ & $r_0,\,r_1$ & $r^\vee_0,\, r^\vee_2$ & $r^\vee_1$ & $3$ & $3$ \\ \hline
$F_4$ & $r_3,r_4$ & $r_0,\,r_1,\, r_2$ & $r^\vee_0,\, r^\vee_3, r^\vee_4$ & $r^\vee_1,\, r^\vee_2$ & $6$ & $6$\\ \hline
\end{tabular}
\medskip
\caption{ The decomposition of the sets of generators $R$, $R^\vee$ and the Coxeter numbers $m^s$, $m^l$. Numbering of the simple roots is standard (see e.g. Figure 1 in \cite{HP}).}\label{tabdec}
\end{table}
}

\subsection{Weyl group and affine Weyl group}\

Weyl group $W$ is generated by $n$ reflections $r_\al$, $\al\in\Delta$. Applying the action of $W$ on the set of simple roots $\Delta$, we obtain the entire root system $W\Delta$. The root system $W\Delta$ contains two subsystems $W\Delta_s$ and $W\Delta_l$, i.e. we have the disjoint decomposition
\begin{equation}\label{rsl}
W\Delta=W\Delta_s\cup W\Delta_l.
\end{equation}

The set of $n+1$ generators $R$ generates the affine Weyl group $W^{\mathrm{aff}}$.
The affine Weyl group $W^{\mathrm{aff}}$ can be viewed as the semidirect product of the Abelian group of translations $Q^\vee$ and of the Weyl group~$W$:
\begin{equation}\label{direct}
 W^{\mathrm{aff}}= Q^\vee \rtimes W.
\end{equation}
Thus, for any $w^{\mathrm{aff}}\in W^{\mathrm{aff}}$, there exist a unique $w\in W$ and a unique shift $T(q^{\vee})$ such that $w^{\mathrm{aff}}=T(q^{\vee})w$.
The retraction homomorphism $\psi:{W^{\mathrm{aff}}}\map W $ for $w^{\mathrm{aff}}\in {W^{\mathrm{aff}}}$ is given by
\begin{equation}\label{ret}
\psi(w^{\mathrm{aff}})=\psi(T(q^{\vee})w)=w.
\end{equation}

The fundamental domain $F$ of $W^{\mathrm{aff}}$, which consists of precisely one point of each $W^{\mathrm{aff}}$-orbit, is the convex hull of the points $\left\{ 0, \frac{\om^{\vee}_1}{m_1},\dots,\frac{\om^{\vee}_n}{m_n} \right\}$. Considering $n+1$ real parameters $y_0,\dots, y_n\geq 0$, we have
\begin{align}
F &=\setb{y_1\om^{\vee}_1+\dots+y_n\om^{\vee}_n}{y_0+y_1 m_1+\dots+y_n m_n=1  }. \label{deffun}
\end{align}
Recall that the stabilizer
\begin{equation}
\mathrm{Stab}_{W^{\mathrm{aff}}}(a) = \setb{w^{\mathrm{aff}}\in W^{\mathrm{aff}}}{w^{\mathrm{aff}}a=a}
\end{equation}
of a point $a=y_1\om^{\vee}_1+\dots+y_n\om^{\vee}_n\in F$ is trivial, $\mathrm{Stab}_{W^{\mathrm{aff}}}(a)=1$ if the point $a$ is in the interior of $F$, $a\in \mathrm{int}(F)$. Otherwise the group $\mathrm{Stab}_{W^{\mathrm{aff}}}(a)$
is generated by such $r_i$ for which $y_i=0$, $i=0,\dots,n$.

Considering the standard action of $W$ on the torus $\R^n/Q^{\vee}$, we denote for $x\in \R^n/Q^{\vee}$ the isotropy group and its order by
\begin{equation*}
\mathrm{Stab} (x)=\set{w\in W}{wx=x},\q h_x\equiv |\mathrm{Stab} (x)|
\end{equation*}
and denote the orbit and its order by
 \begin{equation*}
W x=\set{wx\in \R^n/Q^{\vee} }{w\in W},\q \ep(x)\equiv |Wx|.
\end{equation*}
Then we have
\begin{equation}\label{ep}
\ep(x)=\frac{|W|}{h_x}.
\end{equation}
Recall the following three properties from Proposition 2.2 in \cite{HP} of the action of $W$ on the torus $\R^n/Q^{\vee}$:
\begin{enumerate}
\item For any $x\in \R^n/Q^{\vee}$, there exists $x'\in F \cap \R^n/Q^{\vee} $ and $w\in W$ such that
\begin{equation}\label{rfun1}
 x=wx'.
\end{equation}

\item If $x,x'\in F \cap \R^n/Q^{\vee} $ and $x'=wx$, $w\in W$, then
\begin{equation}\label{rfun2}
 x'=x=wx.
\end{equation}
\item If $x\in F \cap \R^n/Q^{\vee} $, i.e. $x=a+Q^{\vee}$, $a\in F$, then $\psi (\mathrm{Stab}_{W^{\mathrm{aff}}}(a))=\mathrm{Stab}(x)$ and
\begin{equation}\label{rfunstab}
\mathrm{Stab} (x) \cong \mathrm{Stab}_{W^{\mathrm{aff}}}(a).
\end{equation}
\end{enumerate}

\subsection{Dual affine Weyl group}\

The dual affine Weyl group $\widehat{W}^{\mathrm{aff}}$ is generated by the set $R^\vee$.
Moreover, $\widehat{W}^{\mathrm{aff}}$ is a semidirect product of the group of shifts $Q$ and the Weyl group $W$
\begin{equation}\label{directd}
 \widehat{W}^{\mathrm{aff}}= Q \rtimes W.
\end{equation}
Thus, for any $w^{\mathrm{aff}}\in\widehat{W}^{\mathrm{aff}}$, there exist a unique $w\in W$ and a unique shift $T(q)$ such that $w^{\mathrm{aff}}=T(q)w$.
The dual retraction homomorphism $\widehat\psi:\widehat{W}^{\mathrm{aff}}\map W $ for $w^{\mathrm{aff}}\in \widehat{W}^{\mathrm{aff}}$ is given by
\begin{equation}\label{retd}
\widehat\psi(w^{\mathrm{aff}})=\widehat\psi(T(q)w)=w.
\end{equation}
The dual fundamental domain $F^\vee$ of $\widehat{W}^{\mathrm{aff}}$ is the convex hull of vertices $\left\{ 0, \frac{\om_1}{m^{\vee}_1},\dots,\frac{\om_n}{m^{\vee}_n} \right\}$. Considering $n+1$ real parameters $z_0,\dots, z_n\geq 0$, we have
\begin{align}
F^\vee &=\setb{z_1\om_1+\dots+z_n\om_n}{z_0+z_1 m_1^{\vee}+\dots+z_n m^{\vee}_n=1  }.\label{deffund} 
\end{align}

Consider the point $a=z_1\om_1+\dots+z_n\om_n\in F^\vee$ such that $z_0+z_1 m_1^{\vee}+\dots+z_n m^{\vee}_n=1$. The isotropy group
\begin{equation}\label{stabdual}
 \mathrm{Stab}_{\widehat{W}^{\mathrm{aff}}}(a) = \setm{w^{\mathrm{aff}}\in \widehat{W}^{\mathrm{aff}}}{w^{\mathrm{aff}}a=a}
\end{equation}
of point $a$ is trivial, $ \mathrm{Stab}_{\widehat{W}^{\mathrm{aff}}}(a)=1$, if $a\in \mathrm{int}(F^\vee)$, i.e. all $z_i>0$, $i=0,\dots,n$. Otherwise the group $\mathrm{Stab}_{\widehat{W}^{\mathrm{aff}}}(a)$
is generated by such $r^{\vee}_i$ for which $z_i=0$, $i=0,\dots,n$.

Recall from \cite{HP} that, for an arbitrary $M\in\N$, the grid $\Lambda_M$ is defined as cosets from the $W-$invariant group $P/MQ$ with a representative element in~$M F^\vee$, i.e.
\begin{equation*}
 \Lambda_M\equiv M F^\vee \cap P/MQ.
 \end{equation*}
Considering a natural action of $W$ on the quotient group $\R^n/MQ$, we denote for $\la \in \R^n/MQ$ the isotropy group and its order by
\begin{equation}\label{hla}
\mathrm{Stab}^{\vee} (\la)=\set{w\in W}{w\la=\la},\q h^{\vee}_{\la}\equiv |\mathrm{Stab}^{\vee} (\la)|.
\end{equation}
Recall the following three properties from Proposition 3.6 in \cite{HP} of the action of $W$ on the quotient group $\R^n/MQ$.
\begin{enumerate}
\item For any $\la\in P/MQ$, there exists $\la'\in\Lambda_M  $ and $w\in W$ such that
\begin{equation}\label{dfun1}
 \la=w\la'.
\end{equation}
\item If $\la,\la'\in \Lambda_M $ and $\la'=w\la$, $w\in W$, then
\begin{equation}\label{lrfun2}
 \la'=\la=w\la.
\end{equation}
\item If $\la\in M F^\vee \cap \R^n/MQ $, i.e. $\la=b+MQ$, $b\in MF^\vee$, then $\widehat\psi (\mathrm{Stab}_{\widehat{W}^{\mathrm{aff}}}(b/M))=  \mathrm{Stab}^{\vee} (\la)$ and
\begin{equation}\label{rfunstab2}
\mathrm{Stab}^{\vee} (\la) \cong \mathrm{Stab}_{\widehat{W}^{\mathrm{aff}}}(b/M).
\end{equation}
\end{enumerate}

\section{Sign homomorphisms and orbit functions}

\subsection{Sign homomorphisms}\

The Weyl group $W$ has the following abstract presentation \cite{BB,H2}
\begin{equation}\label{presentation}r_i^2=1,\quad (r_ir_j)^{m_{ij}}=1,\quad i,j=1,\dots,n\end{equation}
where integers $m_{ij}$ are elements of the Coxeter matrix.
To introduce various classes of orbit functions, we consider 'sign' homomorphisms $\sigma:W\rightarrow\{\pm1 \}.$ An admissible mapping $\sigma$ must satisfy the presentation condition \eqref{presentation}
\begin{equation}\label{admit}
\sigma(r_i)^2=1,\quad (\sigma(r_i)\sigma(r_j))^{m_{ij}}=1,\quad i,j=1,\dots,n.
\end{equation}

If condition \eqref{admit} is satisfied, then it follows from the universality property (see e.g. \cite{BB}) that $\sigma$ is a well-defined homomorphism and its values on any $w\in W$ are given as products of generator values. The following two choices of homomorphism values of generators $r_\al,\,\al\in\Delta$, obviously satisfying \eqref{admit}, lead to the well-known homomorphisms:
\begin{align}
\id(r_\al)&=1 \label{ghomid} \\
\sigma^e(r_\al)&=-1 \label{ghome}
\end{align}
which yield for any $w\in W$
\begin{align}
\id(w)&=1  \\
\sigma^e(w)&=\det w. \label{parity}
\end{align}

It is shown in \cite{MMP} that, for root systems with two different lengths of roots, there are two other available choices. Using the decomposition \eqref{sl}, these two new homomorphisms are given as follows \cite{MMP}:
\begin{align}
\sigma^s(r_\al)&=\begin{cases} 1,\quad \al\in \Delta_l  \\ -1,\quad \al\in \Delta_s\end{cases}\\
\sigma^l(r_\al)&=\begin{cases} 1,\quad \al\in \Delta_s  \\ -1,\quad \al\in \Delta_l.\end{cases}
\end{align}

Since the highest root $\xi\in W \Delta_l$, there exist $w\in W$ and $\al\in\Delta_l $ such that $\xi=w\al$. Then from the relation $r_\xi=wr_\al w^{-1}$, we obtain for any sign homomorphism that $\sigma (r_\xi)=\sigma (r_\al)$ holds. Thus we have
\begin{equation}\label{slxi}
\sigma^s(r_\xi)=1,\q \sigma^l(r_\xi)=-1.
\end{equation}
Similarly, for the highest dual root $\eta$ there exists a root $\beta\in W \Delta_s$ such that $\eta=2\beta /\sca{\beta}{\beta}$, and we obtain
\begin{equation}\label{sleta}
\sigma^s(r_\eta)=-1,\q \sigma^l(r_\eta)=1.
\end{equation}

\subsection{Fundamental domains}\

Each of the sign homomorphisms $\sigma^s$ and $\sigma^l$ determines a decomposition of the fundamental domain~$F$. The factors of this decomposition will be crucial for the study of the orbit functions. We introduce two subsets of $F$:
\begin{align*}
F^s&=\setm{a\in F}{\sigma^s \circ \psi \left(\mathrm{Stab}_{W^{\mathrm{aff}}}(a)\right)=\{1\} }\\
F^l&=\setm{a\in F}{\sigma^l \circ \psi \left(\mathrm{Stab}_{W^{\mathrm{aff}}}(a)\right)=\{1\} }
\end{align*}
where $\psi$ is the retraction homomorphism \eqref{ret}. Since for all points of the interior of $F$ the stabilizer is trivial, i.e. $\mathrm{Stab}_{W^{\mathrm{aff}}}(a)=1$, $a\in \mathrm{int} (F)$, the interior $\mathrm{int} (F)$ is a subset of both $F^s$ and $F^l$. In order to determine the analytic form of the sets $F^s$ and $F^l$, we define two subsets of the boundaries of $F$
\begin{align*}
H^s&=\set{a\in F}{(\exists r\in R^s)(ra=a)}\\
H^l&=\set{a\in F}{(\exists r\in R^l)(ra=a)}.
\end{align*}
Note that, since for the affine reflection $r_0\in R^l$ it holds that $\psi(r_0)=r_\xi$, we have from \eqref{slxi} that $\sigma^s \circ \psi (r_0)=1$ and $\sigma^l \circ \psi (r_0)=-1$. Taking into account the disjoint decomposition \eqref{slaff}, we obtain for any $r\in R$ the following two exclusive choices:
\begin{equation}\label{rs}
\begin{alignedat}{4}
\sigma^s \circ \psi (r)&=-1, &\q \sigma^l \circ \psi (r)&=1,  &\q r&\in R^s\\
\sigma^s \circ \psi (r)&=1, &\q \sigma^l \circ \psi (r)&=-1,  &\q r&\in R^l.
\end{alignedat}
\end{equation}

\begin{tvr}\label{FsFl}For the sets $F^s$ and $F^l$, the following holds:
\begin{enumerate}
\item $F^s=F\setminus H^s$.
\item $F^l=F\setminus H^l$.
\end{enumerate}
\end{tvr}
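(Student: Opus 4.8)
The plan is to prove both statements simultaneously by establishing the equivalence, for a point $a\in F$ and a choice $t\in\{s,l\}$, that $\sigma^t\circ\psi(\mathrm{Stab}_{W^{\mathrm{aff}}}(a))=\{1\}$ fails precisely when $a\in H^t$. Recall from the structure of the stabilizer that $\mathrm{Stab}_{W^{\mathrm{aff}}}(a)$ is generated by those $r_i\in R$ for which the corresponding barycentric coordinate $y_i$ of $a$ (with $y_0+y_1m_1+\dots+y_nm_n=1$) vanishes; in particular $r_ia=a$ if and only if $y_i=0$. So $a\in H^s$ exactly when $y_i=0$ for some $r_i\in R^s$, and likewise $a\in H^l$ exactly when $y_i=0$ for some $r_i\in R^l$.

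First I would prove the inclusion $F\setminus H^s\subseteq F^s$ (and the analogous one for $l$). Take $a\notin H^s$; then no generator $r_i\in R^s$ fixes $a$, so $\mathrm{Stab}_{W^{\mathrm{aff}}}(a)$ is generated by a subset of $R^l$. Applying $\sigma^s\circ\psi$ to any such generator gives $+1$ by the second line of \eqref{rs}; since $\sigma^s\circ\psi$ is a homomorphism, its value on any product of these generators is also $+1$, hence $\sigma^s\circ\psi(\mathrm{Stab}_{W^{\mathrm{aff}}}(a))=\{1\}$ and $a\in F^s$. The argument for $F^l$ is identical, using the first line of \eqref{rs} and the fact that $r_0\in R^l$ contributes $\sigma^l\circ\psi(r_0)=-1$, so only generators from $R^s$ are allowed in the stabilizer of a point of $F\setminus H^l$.

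For the reverse inclusion $F^s\subseteq F\setminus H^s$, I would argue contrapositively: if $a\in H^s$, pick $r_i\in R^s$ with $r_ia=a$, so $r_i\in\mathrm{Stab}_{W^{\mathrm{aff}}}(a)$. Then $\sigma^s\circ\psi(r_i)=-1$ by \eqref{rs}, so $\sigma^s\circ\psi(\mathrm{Stab}_{W^{\mathrm{aff}}}(a))\ni-1$, whence $a\notin F^s$. Again the $l$-case is symmetric. Combining the two inclusions gives $F^s=F\setminus H^s$ and $F^l=F\setminus H^l$.

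The only point requiring care — and the one I expect to be the mild obstacle — is the precise description of $\mathrm{Stab}_{W^{\mathrm{aff}}}(a)$: one must use the fact, recalled after \eqref{deffun}, that this stabilizer is generated exactly by the reflections $r_i$ with $y_i=0$, and that the fixed-point condition $r_ia=a$ is equivalent to $y_i=0$. This ensures that membership in $H^t$ really does correspond to the presence of an $R^t$-generator inside the stabilizer, not merely of some element of $W^{\mathrm{aff}}$ fixing $a$ that happens to be a conjugate of such a reflection; once this dictionary between boundary walls and stabilizer generators is in place, the homomorphism property of $\sigma^t\circ\psi$ together with the sign table \eqref{rs} makes the rest routine.
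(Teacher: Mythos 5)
Your argument is correct and follows essentially the same route as the paper's own proof: both directions rest on the fact that $\mathrm{Stab}_{W^{\mathrm{aff}}}(a)$ is generated exactly by the $r_i$ with $y_i=0$, combined with the sign table \eqref{rs} and the homomorphism property of $\sigma^t\circ\psi$. The extra care you take in spelling out the dictionary between walls of $F$ and stabilizer generators is exactly the point the paper uses implicitly, so there is nothing to add.
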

\begin{proof}
Let $a\in F$.
\begin{enumerate}
\item If $a\notin F\setminus H^s$, then $a\in H^s$, and there exists $r\in R^s$ such that $r\in \mathrm{Stab}_{W^{\mathrm{aff}}}(a)$. Then according to \eqref{rs}, we have $\sigma^s \circ \psi (r)=-1$. Thus, $\sigma^s \circ \psi \left(\mathrm{Stab}_{W^{\mathrm{aff}}}(a)\right)=\{\pm 1\}$ and consequently $a\notin F^s$. Conversely, if $a\in F\setminus H^s$, then the stabilizer $\mathrm{Stab}_{W^{\mathrm{aff}}}(a)$ is either trivial or generated by generators from $R^l$ only. Then, since for any generator $r\in R^l$ it follows from \eqref{rs} that $\sigma^s \circ \psi (r)=1$, we obtain $\sigma^s \circ \psi \left(\mathrm{Stab}_{W^{\mathrm{aff}}}(a)\right)=\{ 1\}$, i.e. $a\in F^s$.
\item This case is completely analogous to case (1).
\end{enumerate}
\end{proof}
The explicit description of domains $F^s$ and $F^l$ now follows from \eqref{deffun} and Proposition \ref{FsFl}. We introduce the symbols $y^s_i$, $y^l_i\in\R$, $i=0,\dots,n$ in the following way:
\begin{equation}\label{ysyl}
\begin{alignedat}{4}
y^s_i&> 0, &\q y^l_i&\geq 0,  &\q r_i&\in R^s\\
y^s_i&\geq 0, &\q y^l_i&> 0,  &\q r_i&\in R^l.
\end{alignedat}
\end{equation}
Thus, the explicit form of $F^s$ and $F^l$ is given by
\begin{equation}\label{FsFlex}
\begin{alignedat}{2}
F^s &=\setb{y^s_1\om^{\vee}_1+\dots+y^s_n\om^{\vee}_n}{y^s_0+y^s_1 m_1+\dots+y^s_n m_n=1  }  \\
F^l &=\setb{y^l_1\om^{\vee}_1+\dots+y^l_n\om^{\vee}_n}{y^l_0+y^l_1 m_1+\dots+y^l_n m_n=1  } .
\end{alignedat}
\end{equation}

\subsection{Dual fundamental domains}\

The sign homomorphisms $\sigma^s$ and $\sigma^l$ also determine a decomposition of the dual fundamental domain~$F^\vee$. The factors of this decomposition will be needed for the study of the discretized orbit functions. We introduce two subsets of $F^\vee$:
\begin{equation}\label{FsFldual}
\begin{alignedat}{2}
F^{s\vee}&=\setm{a\in F^\vee}{\sigma^s \circ \widehat\psi \left(\mathrm{Stab}_{\widehat{W}^{\mathrm{aff}}}(a)\right)=\{1\} }\\
F^{l\vee}&=\setm{a\in F^\vee}{\sigma^l \circ \widehat\psi \left(\mathrm{Stab}_{\widehat{W}^{\mathrm{aff}}}(a)\right)=\{1\} }
\end{alignedat}
\end{equation}
where $\widehat\psi$ is the dual retraction homomorphism \eqref{retd}. Since for all points of the interior of $F^\vee$ the stabilizer is trivial, i.e. $\mathrm{Stab}_{\widehat{W}^{\mathrm{aff}}}(a)=1$, $a\in \mathrm{int} (F^\vee)$, the interior $\mathrm{int} (F^\vee)$ is a subset of both $F^{s\vee}$ and $F^{l\vee}$. In order to determine the analytic form of  sets $F^{s\vee}$ and $F^{l\vee}$, we define two subsets of the boundaries of $F^\vee$
\begin{align*}
H^{s\vee}&=\set{a\in F^\vee}{(\exists r\in R^{s\vee})(ra=a)}\\
H^{l\vee}&=\set{a\in F^\vee}{(\exists r\in R^{l\vee})(ra=a)}.
\end{align*}
Note that, since for the affine reflection $r^\vee_0\in R^{s\vee}$ it holds that $\widehat\psi(r^\vee_0)=r_\eta$, we have from \eqref{sleta} that $\sigma^s \circ \widehat\psi (r^\vee_0)=-1$ and $\sigma^l \circ \widehat\psi (r^\vee_0)=1$. Taking into account the disjoint decomposition \eqref{slaffdual}, we obtain for any $r\in R^\vee$ the following two exclusive choices:
\begin{equation}\label{rsd}
\begin{alignedat}{4}
\sigma^s \circ \widehat\psi (r)&=-1, &\q \sigma^l \circ \widehat\psi (r)&=1,  &\q r&\in R^{s\vee}\\
\sigma^s \circ \widehat\psi (r)&=1, &\q \sigma^l \circ \widehat\psi (r)&=-1,  &\q r&\in R^{l\vee}.
\end{alignedat}
\end{equation}
Similarly to Proposition \ref{FsFl}, we obtain:
\begin{tvr}\label{FsFld}For sets $F^{s\vee}$ and $F^{l\vee}$, the following holds:
\begin{enumerate}
\item $F^{s\vee}=F^\vee\setminus H^{s\vee}$.
\item $F^{l\vee}=F^\vee\setminus H^{l\vee}$.
\end{enumerate}
\end{tvr}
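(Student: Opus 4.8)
The plan is to mimic exactly the argument of Proposition \ref{FsFl}, transporting it to the dual setting via the structures already set up in the excerpt. The essential ingredients are: the description of $\mathrm{Stab}_{\widehat{W}^{\mathrm{aff}}}(a)$ for $a\in F^\vee$ as being trivial when $a\in\mathrm{int}(F^\vee)$ and otherwise generated by those $r^\vee_i$ with $z_i=0$; the disjoint decomposition \eqref{slaffdual}, $R^\vee=R^{s\vee}\cup R^{l\vee}$; and the sign dichotomy \eqref{rsd}, which tells us that $\sigma^s\circ\widehat\psi$ takes the value $-1$ on every generator in $R^{s\vee}$ and $+1$ on every generator in $R^{l\vee}$ (and dually for $\sigma^l$).

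First I would fix $a\in F^\vee$ and prove part (1) by two inclusions, exactly paralleling the proof of Proposition \ref{FsFl}. If $a\notin F^\vee\setminus H^{s\vee}$, then $a\in H^{s\vee}$, so there is $r\in R^{s\vee}$ with $ra=a$; since $r\in\mathrm{Stab}_{\widehat{W}^{\mathrm{aff}}}(a)$ and, by \eqref{rsd}, $\sigma^s\circ\widehat\psi(r)=-1$, the image $\sigma^s\circ\widehat\psi(\mathrm{Stab}_{\widehat{W}^{\mathrm{aff}}}(a))$ contains $-1$, hence equals $\{\pm1\}$, so $a\notin F^{s\vee}$. Conversely, if $a\in F^\vee\setminus H^{s\vee}$, then no generator from $R^{s\vee}$ fixes $a$, so the stabilizer $\mathrm{Stab}_{\widehat{W}^{\mathrm{aff}}}(a)$ — which by the structural description is generated by the $r^\vee_i$ with $z_i=0$ — is generated entirely by elements of $R^{l\vee}$ (or is trivial). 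By \eqref{rsd}, $\sigma^s\circ\widehat\psi(r)=1$ for each such generator $r\in R^{l\vee}$, and since the image of a subgroup under a homomorphism into $\{\pm1\}$ is determined by the generator values, $\sigma^s\circ\widehat\psi(\mathrm{Stab}_{\widehat{W}^{\mathrm{aff}}}(a))=\{1\}$, i.e. $a\in F^{s\vee}$. Part (2) is obtained by the same argument with the roles of $s$ and $l$, of $R^{s\vee}$ and $R^{l\vee}$, and of the two rows of \eqref{rsd} interchanged.

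There is no real obstacle here: the only subtlety worth making explicit is that $H^{s\vee}$ is a subset of the \emph{boundary} of $F^\vee$, so $\mathrm{int}(F^\vee)\subseteq F^\vee\setminus H^{s\vee}$, and that on such interior points the stabilizer is trivial and maps to $\{1\}$ trivially — this is the same remark already made before Proposition \ref{FsFld}. One should also note that a fixed point of $r^\vee_i$ in $F^\vee$ is precisely a point with $z_i=0$, which is why "$a\in H^{s\vee}$" and "some $z_i=0$ with $r^\vee_i\in R^{s\vee}$" coincide; this matches the way $\mathrm{Stab}_{\widehat{W}^{\mathrm{aff}}}(a)$ was described via \eqref{stabdual}. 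Given all of this machinery is already in place, the proof is a direct transcription, and I would simply write "This is completely analogous to the proof of Proposition \ref{FsFl}, using \eqref{slaffdual} and \eqref{rsd} in place of \eqref{slaff} and \eqref{rs}," possibly spelling out case (1) as above for the reader's convenience.
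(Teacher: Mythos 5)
Your proof is correct and is exactly what the paper intends: the paper gives no separate proof of Proposition \ref{FsFld}, merely noting it follows ``similarly to Proposition \ref{FsFl},'' and your transcription to the dual setting via \eqref{slaffdual}, \eqref{rsd} and the description of $\mathrm{Stab}_{\widehat{W}^{\mathrm{aff}}}(a)$ in \eqref{stabdual} is the intended argument.
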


The explicit description of domains $F^{s\vee}$ and $F^{l\vee}$ now follows from \eqref{deffund} and Proposition \ref{FsFld}. We introduce the symbols $z^s_i$, $z^l_i\in\R$, $i=0,\dots,n$ in the following way:
\begin{equation}\label{zszl}
\begin{alignedat}{4}
z^s_i&> 0, &\q z^l_i&\geq 0,  &\q r_i&\in R^{s\vee}\\
z^s_i&\geq 0, &\q z^l_i&> 0,  &\q r_i&\in R^{l\vee}.
\end{alignedat}
\end{equation}
Thus, the explicit form of $F^{s\vee}$ and $F^{l\vee}$ is given by
\begin{equation}\label{FsFlexdual}
\begin{alignedat}{2}
F^{s\vee} &=\setb{z^s_1\om_1+\dots+z^s_n\om_n}{z^s_0+z^s_1 m^\vee_1+\dots+z^s_n m^\vee_n=1  }  \\
F^{l\vee} &=\setb{z^l_1\om_1+\dots+z^l_n\om_n}{z^l_0+z^l_1 m^\vee_1+\dots+z^l_n m^\vee_n=1  } .
\end{alignedat}
\end{equation}

\section{$S^s-$ and $S^l-$functions}\

Four sign homomorphisms $\id$, $\sigma^e$, $\sigma^s$ and $\sigma^l$ induce four types of families of complex orbit functions. Within each family, determined by $\sigma\in \{\id,\,\sigma^e,\, \sigma^s,\,\sigma^l \}$, are the complex functions $\phi^\sigma_b:\R^n\map \C$ labeled by weights $b\in P$ and in the general form
\begin{equation}\label{genorb}
\phi^\sigma_b(a)=\sum_{w\in W}\sigma (w)\, e^{2 \pi i \sca{ wb}{a}},\q a\in \R^n.
\end{equation}
The resulting functions for $\sigma=\id$ in (\ref{genorb}) are called $C-$functions; for their detailed review, see~\cite{KP2}. For $\sigma=\sigma^e$,
we obtain the well-known $S-$functions \cite{KP3}. The discretization properties of both $C-$ and $S-$functions on a finite fragment
of the grid $\frac{1}{M}P^{\vee}$ were described in \cite{HP}. The remaining two options of homomorphisms $\sigma^s$ and $\sigma^l$ and
corresponding functions $\phi^{\sigma^s}_\la$, $\phi^{\sigma^l}_\la$, called $S^l-$ and $S^s-$functions \cite{MMP}, were studied in detail for
$G_2$ only \cite{Sz}. In order to describe the discretization of functions $\phi^{\sigma^s}_\la$ and $\phi^{\sigma^l}_\la$ in full generality, we first review their basic properties.

\subsection{$S^s-$functions}\

\subsubsection{Symmetries of $S^s-$functions}
Choosing $\sigma=\sigma^s$ in \eqref{genorb}, we obtain $S^s-$functions $\phi^{\sigma^s}_b$; we abbreviate the notation by denoting $\phi^s_b\equiv \phi^{\sigma^s}_b$, i.e.
\begin{equation}\label{genorbs}
\phi^s_b(a)=\sum_{w\in W}\sigma^s (w)\, e^{2 \pi i \sca{ wb}{a}},\q a\in \R^n, \, b\in P.
\end{equation}
The following properties of $S^s-$functions are crucial:
\begin{itemize}
\item (anti)symmetry with respect to $w\in W$
\begin{align}\label{Sssym}
\phi^s_{b}(wa)&=\sigma^s(w)\phi^s_{b}(a)\\
\phi^s_{wb}(a)&=\sigma^s(w)\phi^s_{b}(a)\label{Sssymla}
 \end{align}
\item invariance with respect to shifts from $q^{\vee} \in Q^{\vee}$
\begin{equation}\label{Ssinv}
\phi^s_{b}(a+q^\vee)=\phi^s_{b}(a).
 \end{equation}
\end{itemize}
Thus, the $S^s-$functions are (anti)symmetric with respect to the affine Weyl group $W^{\mathrm{aff}}$. This allows us to consider the values $\phi^s_{b}(a)$ only for points of the fundamental domain $a\in F$. Moreover,
from \eqref{rs}, \eqref{Sssym} we deduce that
\begin{equation}\label{bound2}
\phi^s_{b}(ra)=-\phi^s_{b}(a),\q  r\in R^s.
\end{equation}
This antisymmetry implies that the functions $\phi^s_{b}$ are for all $b\in P$ zero on part $H^s$ of the boundary of $F$
\begin{equation}\label{Fss}
\phi^s_{b}(a')=0,\q  a'\in H^s
\end{equation}
and therefore {\it we consider the functions $\phi^s_{b}$ on the fundamental domain $F^s=F\setminus H^s$ only}.

\subsubsection{Discretization of $S^s-$functions}

In order to develop discrete calculus of $S^s-$functions, we investigate the behavior of these functions on the grid $\frac{1}{M}P^{\vee}$. Suppose we have fixed $M\in \N$ and $u\in \frac{1}{M}P^{\vee}$. It follows from (\ref{Ssinv}) that we can consider $\phi^s_b$ as a function on cosets from $\frac{1}{M}P^{\vee}/Q^{\vee}$. It follows from (\ref{Fss}) that we can consider $\phi^s_b$ only on the set
\begin{equation}
F^s_M\equiv\frac{1}{M}P^{\vee}/Q^{\vee}\cap F^s.
\end{equation}
Next we have
\begin{equation*}
 \phi^s_{b+MQ}(u)=\phi^s_b(u), \q u\in F_M^s
 \end{equation*}
and thus we can consider the functions $\phi^s_\la$ on $F_M^s$ parametrized by cosets from $\la\in P/MQ$. Moreover, it follows from (\ref{dfun1}) and (\ref{Sssymla}) that we can consider $\phi^s_\la$ on $F_M^s$ parameterized by classes from $\Lambda_M$. Taking any $\la\in \Lambda_M$ and any reflection $r^\vee\in R^{s\vee}$ we calculate directly using \eqref{sleta}, \eqref{Sssymla} and \eqref{Ssinv} that
\begin{equation*}
\phi^s_{Mr^\vee (\frac{\la}{M})}( u)=- \phi^s_\la(u),\q u\in F_M^s.
\end{equation*}
This implies that, for $\la \in M H^{s\vee}\cap \Lambda_M $, the functions $\phi^s_\la$ are zero on $F_M^s$, i.e.
$$
 \phi^s_{\la}( u)=0,\q \la \in M H^{s\vee}\cap \Lambda_M,\, u\in F_M^s.
$$
Defining the set
\begin{equation}\label{Ls}
\Lambda^s_M\equiv P/MQ \cap MF^{s\vee}
\end{equation}
we conclude that {\it we can consider $S^s-$functions $\phi^s_\la$ on $F_M^s$ parameterized by $\la\in\Lambda_M^s$ only}.

\subsection{$S^l-$functions}\

\subsubsection{Symmetries of $S^l-$functions}
Choosing $\sigma=\sigma^l$ in \eqref{genorb}, we obtain $S^l-$functions $\phi^{\sigma^l}_b$; we abbreviate the notation by denoting $\phi^l_b\equiv \phi^{\sigma^l}_b$, i.e.
\begin{equation}\label{genorbl}
\phi^l_b(a)=\sum_{w\in W}\sigma^l (w)\, e^{2 \pi i \sca{ wb}{a}},\q a\in \R^n, \, b\in P.
\end{equation}
The following properties of $S^l-$functions are crucial
\begin{itemize}
\item (anti)symmetry with respect to $w\in W$
\begin{align}\label{Slsym}
\phi^l_{b}(wa)&=\sigma^l(w)\phi^l_{b}(a)\\
\phi^l_{wb}(a)&=\sigma^l(w)\phi^l_{b}(a)\label{Slsymla}
 \end{align}
\item invariance with respect to shifts from $q^{\vee} \in Q^{\vee}$
\begin{equation}\label{Slinv}
\phi^l_{b}(a+q^\vee)=\phi^l_{b}(a).
 \end{equation}
\end{itemize}
Thus, the $S^l-$functions are (anti)symmetric with respect to the affine Weyl group $W^{\mathrm{aff}}$. This allows us to consider the values $\phi^l_{b}(a)$ only for points of the fundamental domain $a\in F$. Moreover,
from \eqref{rs}, \eqref{Slsym} and \eqref{Slinv} we deduce that
\begin{equation}\label{bound2l}
\phi^l_{b}(ra)=-\phi^l_{b}(a),\q  r\in R^l.
\end{equation}
This antisymmetry implies that the functions $\phi^l_{b}$ are for all $b\in P$ zero on part $H^l$ of the boundary of $F$
\begin{equation}\label{Fls}
\phi^l_{b}(a')=0,\q  a'\in H^l
\end{equation}
and therefore {\it we consider the functions $\phi^l_{b}$ on the fundamental domain $F^l=F\setminus H^l$ only}.

\subsubsection{Discretization of $S^l-$functions}

In order to develop discrete calculus of $S^l-$functions, we investigate the behavior of these functions on the grid $\frac{1}{M}P^{\vee}$. Suppose we have fixed $M\in \N$ and $u\in \frac{1}{M}P^{\vee}$. It follows from (\ref{Slinv}) that we can consider $\phi^l_b$ as a function on cosets from $\frac{1}{M}P^{\vee}/Q^{\vee}$. It follows from (\ref{Fls}) that we can consider $\phi^l_b$ only on the set
\begin{equation}
F^l_M\equiv\frac{1}{M}P^{\vee}/Q^{\vee}\cap F^l.
\end{equation}
Next we have
\begin{equation*}
 \phi^l_{b+MQ}(u)=\phi^l_b(u), \q u\in F_M^l
 \end{equation*}
and thus we can consider the functions $\phi^l_\la$ on $F_M^l$ parametrized by cosets from $\la\in P/MQ$. Moreover, it follows from (\ref{dfun1}) and (\ref{Slsymla}) that we can consider $\phi^l_\la$ on $F_M^l$ parameterized by classes from $\Lambda_M$. Taking any $\la\in \Lambda_M$ and any reflection $r^\vee\in R^{l\vee}$, we calculate directly using \eqref{sleta}, \eqref{Slsymla} that
\begin{equation*}
\phi^l_{r^\vee \la}( u)=- \phi^l_\la(u),\q u\in F_M^l.
\end{equation*}
This implies that for $\la \in M H^{l\vee}\cap \Lambda_M $, the functions $\phi^l_\la$ are zero on $F_M^l$, i.e.
$$
 \phi^l_{\la}( u)=0,\q \la \in M H^{l\vee}\cap \Lambda_M,\, u\in F_M^l.
$$
Defining the set
\begin{equation}\label{Ll}
\Lambda^l_M\equiv P/MQ \cap MF^{l\vee}
\end{equation}
we conclude that {\it we can consider $S^l-$functions $\phi^l_\la$ on $F_M^l$ parameterized by $\la\in\Lambda_M^l$ only}.

\section{Number of grid elements}

\subsection{Number of elements of $F_M^s$ and $F_M^l$}\

Recall from \cite{HP} that, for an arbitrary $M\in\N$, the grid $F_M$ is given as cosets from the $W-$invariant group $\frac{1}{M}P^{\vee}/Q^{\vee}$ with a representative element in the fundamental domain $F$
\begin{equation*}
F_M\equiv\frac{1}{M}P^{\vee}/Q^{\vee}\cap F
 \end{equation*}
and the following property holds
\begin{equation}\label{WFM}
 WF_M = \frac{1}{M}P^{\vee}/Q^{\vee}.
\end{equation}
The representative points of $F_M$ can be explicitly written as
\begin{equation}\label{FM}
 F_M = \setb{\frac{u_1}{M}\om^{\vee}_1+\dots+\frac{u_n}{M}\om^{\vee}_n}{u_0,u_1,\dots ,u_n \in \Z^{\geq 0},\, u_0+u_1m_1+\dots + u_n m_n=M}.
 \end{equation}
The number of elements of $F_M$, denoted by $|F_M|$, are also calculated in \cite{HP} for all simple Lie algebras. Using these results, we derive the number of elements of $F_M^s$ and $F_M^l$. Firstly, we describe explicitly the sets $F_M^s$ and $F_M^l$. Similarly to \eqref{ysyl}, we introduce the symbols $u^s_i$, $u^l_i\in\R$, $i=0,\dots,n$:
\begin{equation}
\begin{alignedat}{4}
u^s_i&\in \N, &\q u^l_i&\in \Z^{\geq 0},  &\q r_i&\in R^s\\
u^s_i&\in \Z^{\geq 0}, &\q u^l_i&\in\N,  &\q r_i&\in R^l.
\end{alignedat}
\end{equation}
The explicit form of $F^s_M$ and $F^l_M$ then follows from the explicit form of $F^s$ and $F^l$ in \eqref{FsFlex}:
\begin{align}
F^s_M &=\setb{\frac{u^s_1}{M}\om^{\vee}_1+\dots+\frac{u^s_n}{M}\om^{\vee}_n}{u^s_0+u^s_1 m_1+\dots+u^s_n m_n=M  } \label{Fsex} \\
F^l_M &=\setb{\frac{u^l_1}{M}\om^{\vee}_1+\dots+\frac{u^l_n}{M}\om^{\vee}_n}{u^l_0+u^l_1 m_1+\dots+u^l_n m_n=M  } .
\end{align}
Using the following proposition, the number of elements of $F_M^s$ and $F_M^l$  can be obtained from the formulas for $|F_M|$.
\begin{tvr}\label{Cox}
Let $m^s$ and $m^l$ be the short and long Coxeter numbers, respectively. Then
\begin{equation}\label{numFsFl}
| F^s_M|=\begin{cases}0 & M<m^s \\ 1 & M=m^s \\ |F_{M-m^s}| & M>m^s .\end{cases},\q | F^l_M|=\begin{cases}0 & M<m^l \\ 1 & M=m^l \\ |F_{M-m^l}| & M>m^l .\end{cases}
\end{equation}
\end{tvr}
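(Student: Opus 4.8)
The plan is to establish a bijection between $F^s_M$ and $F_{M-m^s}$ when $M>m^s$, and to treat the boundary cases $M\le m^s$ directly; the argument for $F^l_M$ is then completely analogous with $m^s$ replaced by $m^l$. Recall from \eqref{Fsex} that the representative points of $F^s_M$ are parametrized by tuples $(u^s_0,u^s_1,\dots,u^s_n)$ of nonnegative integers subject to $u^s_0+u^s_1m_1+\dots+u^s_nm_n=M$, with the extra constraint that $u^s_i\geq 1$ whenever $r_i\in R^s$. The idea is simply to subtract $1$ from each coordinate indexed by $R^s$: define $v_i=u^s_i-1$ if $r_i\in R^s$ and $v_i=u^s_i$ if $r_i\in R^l$. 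Since $m_0=1$ by convention (the coefficient of $u^s_0$ in the defining linear form), the total weight drops by exactly $\sum_{r_i\in R^s}m_i=m^s$, so the tuple $(v_0,\dots,v_n)$ consists of nonnegative integers satisfying $v_0+v_1m_1+\dots+v_nm_n=M-m^s$, i.e. it parametrizes a point of $F_{M-m^s}$ via \eqref{FM}. This map is evidently invertible (add $1$ back to the $R^s$-coordinates), hence a bijection $F^s_M\to F_{M-m^s}$, giving $|F^s_M|=|F_{M-m^s}|$.

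First I would carefully note which coordinate $u^s_0$ falls into: by the decomposition \eqref{slaff}, $r_0\in R^l$, so $u^s_0$ is only required to be a nonnegative integer, not a positive one; this is consistent with the shift above, which leaves $u^s_0$ untouched. I would also record that $m^s=\sum_{\al_i\in\Delta_s}m_i$ is precisely the sum of the marks $m_i$ over those $i$ with $r_i\in R^s$ (the affine node $r_0$ being long), so the bookkeeping in the previous paragraph is exactly right. For the case $M=m^s$: the minimal admissible tuple has $u^s_i=1$ for all $r_i\in R^s$ and $u^s_i=0$ for all $r_i\in R^l$ (including $i=0$), which already accounts for total weight $m^s$; any larger choice would overshoot $M$. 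Hence there is a unique point and $|F^s_M|=1$, consistent with $|F_0|=1$. For $M<m^s$ there is simply no way to satisfy $u^s_i\geq 1$ for all $r_i\in R^s$ while keeping the weighted sum equal to $M<m^s$, so $F^s_M=\emptyset$ and $|F^s_M|=0$.

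The $F^l_M$ statement follows by the identical argument, using that $m^l=1+\sum_{\al_i\in\Delta_l}m_i$ is the sum of marks over the indices $i$ with $r_i\in R^l$, where now the affine node contributes the $+1$ (its mark being $m_0=1$); subtracting $1$ from each $R^l$-coordinate, including $u^l_0$, produces a bijection $F^l_M\to F_{M-m^l}$ for $M>m^l$, a single point for $M=m^l$, and the empty set for $M<m^l$.

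I do not anticipate a genuine obstacle here: the proof is a direct change of variables on the explicit lattice-point descriptions \eqref{Fsex}. The only point requiring care — and the one I would state explicitly — is the correct assignment of the affine index $0$ to $R^l$ (hence to $m^l$, not $m^s$), which is what makes the two mark-sums $m^s$ and $m^l$ add up to the full Coxeter number $m$ and makes the shift land exactly on the defining equation of $F_{M-m^s}$ (resp. $F_{M-m^l}$). One should also remark that the bijection is compatible with the identification of $F^s_M$ as a subset of the torus $\frac1M P^\vee/Q^\vee$, but since the representatives in \eqref{FM}, \eqref{Fsex} are chosen canonically inside $F$, this is automatic.
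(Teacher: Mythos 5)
Your proof is correct and is essentially the paper's own argument: the paper performs exactly the same change of variables $u^s_i=1+u_i$ for $r_i\in R^s$ (resp.\ $u^l_i=1+u_i$ for $r_i\in R^l$, including the affine index $0$) to reduce the defining equation of $F^s_M$ to that of $F_{M-m^s}$, and disposes of the cases $M=m^s$ and $M<m^s$ in the same direct way. Your additional remarks on the placement of the affine node in $R^l$ and on $m^s+m^l=m$ are accurate but not needed beyond what the paper states.
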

\begin{proof}
Taking non-negative numbers $u_i\in \Z^{\geq 0}$ and substituting the relations $ u^s_i=1+u_i$ if $r_i\in R^s$ and $ u^s_i=u_i$ if $r_i\in R^l$ into the defining relation (\ref{Fsex}), we obtain
\begin{equation*}
 u_0+m_1u_1+\dots + m_n u_n= M-m^s,\q u_0,\dots,u_n\in \Z^{\geq 0}.
\end{equation*}
This equation has one solution $[0,\dots,0]$ if $M=m^s$, no solution if $M<m^s$, and is equal to the defining relation (\ref{FM}) of $F_{M-m^s}$ if $M>m^s$. The case of $F_M^l$ is similar.
\end{proof}


\begin{thm}\label{numAn}
The numbers of points of grids $F_M^s$ and $F_M^l$ of Lie algebras $B_n$, $C_n$, $G_2$ and $F_4$ are given by the following relations.
\begin{enumerate}\item  $C_n,\,n\geq 2$,
\begin{align*}
&|F^s_{2k}(C_n)|= \begin{pmatrix}k+1 \\ n \end{pmatrix}+\begin{pmatrix}k \\ n \end{pmatrix} \\
&|F^s_{2k+1}(C_n)|= 2\begin{pmatrix}k+1 \\ n \end{pmatrix}
\end{align*}
\begin{align*}
&|F^l_{2k}(C_n)|= \begin{pmatrix}n+k-1 \\ n \end{pmatrix}+\begin{pmatrix}n+k-2 \\ n \end{pmatrix}\\
&|F^l_{2k+1}(C_n)|= 2\begin{pmatrix}n+k-1 \\ n \end{pmatrix}
\end{align*}
\item $B_n,\,n\geq 3$,
$$|F_M^s(B_n)|=|F_M^l(C_n)|$$
$$|F_M^l(B_n)|=|F_M^s(C_n)|$$
\item $G_2$
\begin{equation*}
\begin{alignedat}{4}
|F^s_{6k}(G_2)|&= 3k^2, &\qquad |F^s_{6k+1}(G_2)|&= 3k^2+k\\
|F^s_{6k+2}(G_2)|&= 3k^2+2k, &\qquad |F^s_{6k+3}(G_2)|&=3k^2+3k+1\\
|F^s_{6k+4}(G_2)|&=3k^2+4k+1,  &\qquad |F^s_{6k+5}(G_2)|&=3k^2+5k+2.
\end{alignedat}
\end{equation*}
$$|F^l_{M}(G_2)|=|F^s_M(G_2)|$$
\item $F_4$
\begin{equation*}
\begin{alignedat}{4}
|F^s_{12k}(F_4)|&= 18k^4-k^2, &\qquad |F^s_{12k+1}(F_4)|&= 18k^4+6k^3-\frac52k^2-\frac12k\\
|F^s_{12k+2}(F_4)|&= 18k^4+12k^3+2k^2, &\qquad |F^s_{12k+3}(F_4)|&=18k^4+18k^3+\frac72k^2-\frac12k\\
|F^s_{12k+4}(F_4)|&=18k^4+24k^3+11k^2+2k,  &\qquad |F^s_{12k+5}(F_4)|&=18k^4+30k^3+\frac{31}{2}k^2+\frac52k\\
|F^s_{12k+6}(F_4)|&=18k^4+36k^3+26k^2+8k+1,  &\qquad |F^s_{12k+7}(F_4)|&=18k^4+42k^3+\frac{67}{2}k^2+\frac{21}{2}k+1\\
|F^s_{12k+8}(F_4)|&=18k^4+48k^3+47k^2+20k+3,  &\qquad |F^s_{12k+9}(F_4)|&=18k^4+54k^3+\frac{115}{2}k^2+\frac{51}{2}k+4\\
|F^s_{12k+10}(F_4)|&=18k^4+60k^3+74k^2+40k+8,  &\qquad |F^s_{12k+11}(F_4)|&=18k^4+66k^3+\frac{175}{2}k^2+\frac{99}{2}k+10\\
\end{alignedat}
\end{equation*}
$$|F^l_M(F_4)|=|F^s_M(F_4)|.$$
\end{enumerate}
\end{thm}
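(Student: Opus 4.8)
The plan is to reduce everything to the already-computed values $|F_M|$ via Proposition~\ref{Cox}, and then to exploit the duality $B_n\leftrightarrow C_n$ and the self-duality of $G_2$, $F_4$ to halve the work. First I would read off the relevant Coxeter numbers from Table~\ref{tabdec}: for $C_n$ one has $m^s=2n-2$ and $m^l=2$, for $B_n$ the two are swapped, and for $G_2$, $F_4$ one has $m^s=m^l$ (equal to $3$ and $6$ respectively). By Proposition~\ref{Cox}, $|F^s_M(C_n)|=|F_{M-2n+2}(C_n)|$ once $M\ge 2n-2$ (and the small-$M$ cases $M<m^s$, $M=m^s$ are recorded separately by the proposition, but the binomial formulas are to be checked to reproduce these too), and similarly $|F^l_M(C_n)|=|F_{M-2}(C_n)|$ for $M\ge 2$. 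So the whole theorem is a matter of substituting the shifted argument into the known closed forms for $|F_M|$ of $B_n,C_n,G_2,F_4$ from \cite{HP} and simplifying.

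Next I would handle $C_n$ in detail, since $B_n$ follows by duality. From \cite{HP} the number $|F_M(C_n)|$ is a quasi-polynomial in $M$ with period $2$, given (in the notation there) by a binomial expression in $\lfloor M/2\rfloor$; writing $M=2k$ or $M=2k+1$ and inserting $M\mapsto M-(2n-2)$ for the short case and $M\mapsto M-2$ for the long case yields precisely the claimed $\binom{k+1}{n}+\binom{k}{n}$, $2\binom{k+1}{n}$, $\binom{n+k-1}{n}+\binom{n+k-2}{n}$, $2\binom{n+k-1}{n}$ after relabeling $k$. The parity bookkeeping is the only subtlety: subtracting an even number $2n-2$ preserves the parity of $M$, so the even/odd split of $F^s_M$ matches the even/odd split of $F_{M-2n+2}$, and likewise for the long case; this is why the formulas come out cleanly as two cases rather than a finer subdivision.

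For $B_n$ I would simply invoke Proposition~\ref{Cox} again: since $B_n$ has $m^s=2n-2$ replaced by... in fact $B_n$ has short Coxeter number $m^s=2$ and long $m^l=2n-2$, exactly the mirror of $C_n$. Combining this with the fact (already available, being the same computation with roles of marks and dual marks exchanged, or by the standard $B_n$–$C_n$ Langlands duality of the affine diagrams) that $|F_M(B_n)|=|F_M(C_n)|$ — the extended Dynkin diagrams of $B_n$ and $C_n$ carry the same marks up to the $s\leftrightarrow l$ flip, so Proposition~\ref{Cox} transports one identity to the other — gives $|F^s_M(B_n)|=|F_{M-2}(B_n)|=|F_{M-2}(C_n)|=|F^l_M(C_n)|$ and symmetrically $|F^l_M(B_n)|=|F^s_M(C_n)|$. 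For $G_2$ and $F_4$, $m^s=m^l$, so $|F^s_M|=|F^l_M|$ is immediate from Proposition~\ref{Cox}, and it remains only to substitute $M\mapsto M-3$ (resp.\ $M\mapsto M-6$) into the period-$6$ (resp.\ period-$12$) quasi-polynomial $|F_M|$ from \cite{HP} and expand; the residue classes of $M$ modulo $6$ (resp.\ $12$) are permuted by the shift, which reshuffles the twelve listed sub-cases among themselves.

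The main obstacle is not conceptual but the sheer bulk of algebra in the $F_4$ case: one must take the degree-$4$ quasi-polynomial $|F_M(F_4)|$ with period $12$ from \cite{HP}, perform the shift $M\mapsto M-6$, and re-expand each of the twelve residue polynomials in the new variable $k$, keeping the fractional coefficients $\tfrac12,\tfrac52,\tfrac72,\dots$ straight. I would organize this by tabulating $|F_{M}(F_4)|$ for $M\equiv 0,1,\dots,11 \pmod{12}$, noting that the shift sends residue $j$ to residue $j+6 \bmod 12$, and then matching; a single check of one value (say $M$ near $m^s$, against the $|F^s_M|=1$ clause of Proposition~\ref{Cox}) guards against arithmetic slips. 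The $G_2$ case is the same but small enough to do by hand without risk.
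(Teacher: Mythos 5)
Your proposal is correct and follows essentially the same route as the paper: the authors likewise reduce the count to $|F^s_M|=|F_{M-m^s}|$ and $|F^l_M|=|F_{M-m^l}|$ via Proposition~\ref{Cox} with the Coxeter numbers from Table~\ref{tabdec}, and then verify the stated closed forms against the known quasi-polynomials for $|F_M|$ from \cite{HP} (treating $B_n$ by the $s\leftrightarrow l$ swap and $G_2$, $F_4$ by the equality $m^s=m^l$). The only difference is that you spell out the parity and residue-class bookkeeping that the paper leaves as ``verified directly.''
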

\begin{proof}
For the case $C_n$, we have that $|F_{2k}(C_n)|=\begin{pmatrix}n+k \\ n \end{pmatrix}+\begin{pmatrix}n+k-1 \\ n \end{pmatrix}$ from \cite{HP} and  $m^s=2n-2$ from Table \ref{tabdec}. It can be verified directly that the formula
 $$|F^s_{2k}(C_n)|= \begin{pmatrix}k+1 \\ n \end{pmatrix}+\begin{pmatrix}k \\ n \end{pmatrix}$$
satisfies (\ref{numFsFl}) for all values of $k\in \N$. Analogously, we obtain formulas for the remaining cases.
\end{proof}
\begin{example}\label{ex1}
For the Lie algebra $C_2$, we have Coxeter number $m=4$ and $c=2$. For $M=4$, the order of the group $\frac{1}{4}P^{\vee}/Q^{\vee}$ is equal to $32$, and according to Theorem \ref{numAn} we calculate $$\abs{F^s_4(C_2)}=|F^l_4(C_2)|=\comb{3}{2}+\comb{2}{2}=4. $$
The coset representatives of $\frac{1}{4}P^{\vee}/Q^{\vee}$ and the fundamental domains $F^s$ and $F^l$ are depicted in Figure~\ref{figC2}.
\begin{figure}
\resizebox{11cm}{!}{\input{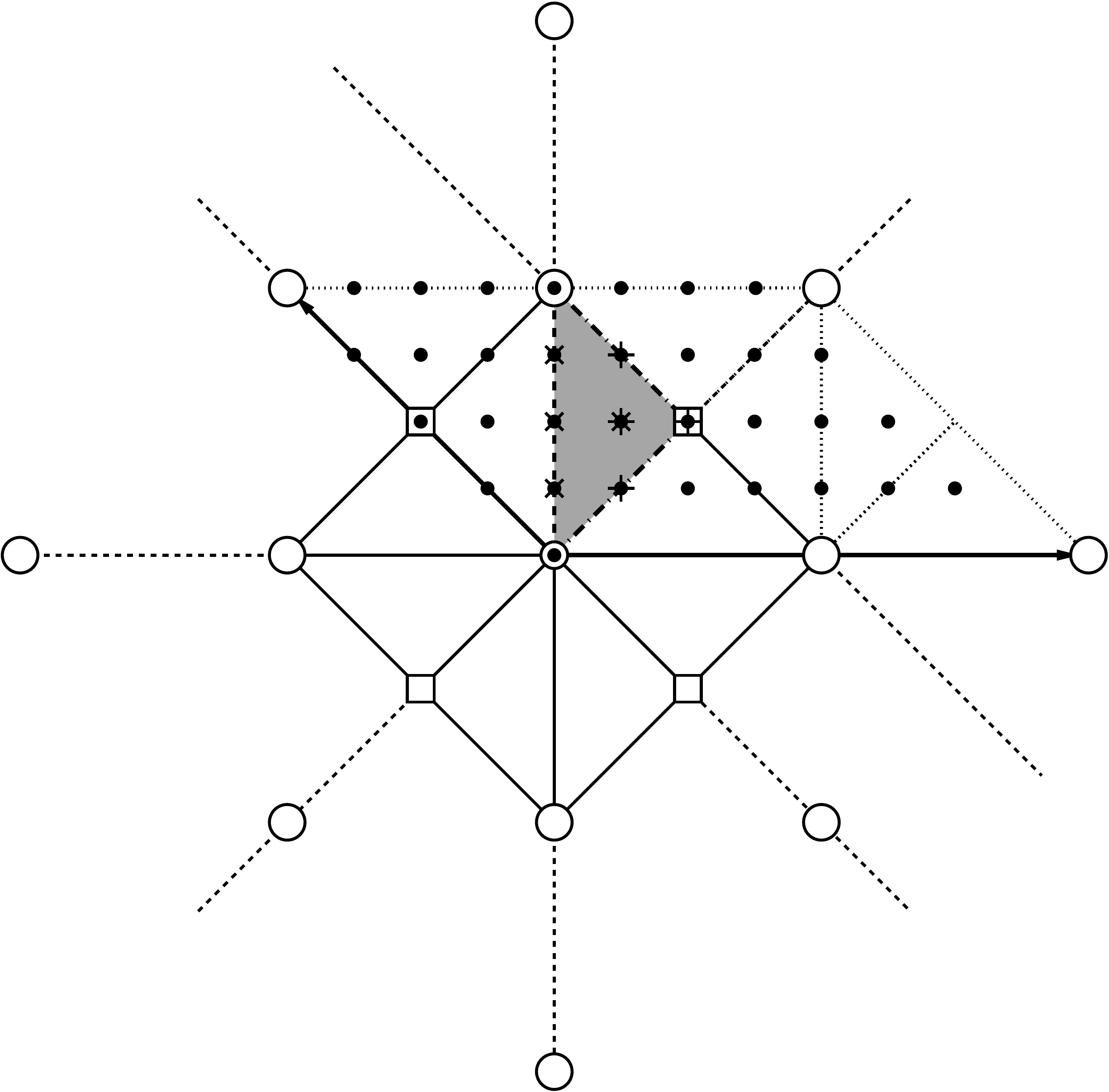_t}}
\caption{ The fundamental domains $F^s$ and $F^l$ of $C_2$. The fundamental domain $F$ is depicted as the grey triangle containing borders $H^s$ and $H^l$, depicted as the thick dashed line and dot-and-dashed lines, respectively. The coset representatives of $\frac{1}{4}P^{\vee}/Q^{\vee}$ are shown as $32$ black dots. The four representatives belonging to $F^s_4$ and $F^l_4$ are crossed with '$+$' and '$\times$', respectively. The dashed lines represent 'mirrors' $r_0,r_1$ and $r_2$. Circles are elements of the root lattice $Q$; together with the squares they are elements of the weight lattice $P$.  }\label{figC2}
\end{figure}
\end{example}
\subsection{Number of elements of $\Lambda_M^s$ and $\Lambda_M^l$}\

In this section, we relate the numbers of elements of $F_M^s$, $F_M^l$ to the numbers of elements $\Lambda_M^s$, $\Lambda_M^l$, defined by \eqref{Ls}, \eqref{Ll}.
Firstly, we describe explicitly the sets $\Lambda_M^s$ and $\Lambda_M^l$.
Similarly to \eqref{zszl}, we introduce the symbols $t^s_i$, $t^l_i\in\R$, $i=0,\dots,n$:
\begin{equation}
\begin{alignedat}{4}
t^s_i&\in \N, &\q t^l_i&\in \Z^{\geq 0},  &\q r_i&\in R^{s\vee}\\
t^s_i&\in \Z^{\geq 0}, &\q t^l_i&\in\N,  &\q r_i&\in R^{l\vee}.
\end{alignedat}
\end{equation}
The explicit form of $\Lambda^s_M$ and $\Lambda^l_M$ then follows from the explicit form of $F^{s\vee}$ and $F^{l\vee}$ in \eqref{FsFlexdual}:
\begin{equation}
\begin{alignedat}{2}
\Lambda^s_M &=\setb{t^s_1\om_1+\dots+t^s_n\om_n}{t^s_0+t^s_1 m^\vee_1+\dots+t^s_n m^\vee_n=M  } \\
\Lambda^l_M &=\setb{t^l_1\om_1+\dots+t^l_n\om_n}{t^l_0+t^l_1 m^\vee_1+\dots+t^l_n m^\vee_n=M  } .
\end{alignedat}
\end{equation}
Similarly to Proposition \ref{Cox}, we obtain the following one.
\begin{tvr}\label{Cox2}
Let $m^{s\vee}$ and $m^{l\vee}$ be the short and the long dual Coxeter numbers, respectively. Then
\begin{equation}\label{numLsLl}
| \Lambda^s_M|=\begin{cases}0 & M<m^{s\vee} \\ 1 & M=m^{s\vee} \\ |F_{M-m^{s\vee}}| & M>m^{s\vee} \end{cases},\q | \Lambda^l_M|=\begin{cases}0 & M<m^{l\vee} \\ 1 & M=m^{l\vee} \\ |F_{M-m^{l\vee}}| & M>m^{l\vee} .\end{cases}
\end{equation}
\end{tvr}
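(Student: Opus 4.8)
The plan is to follow the proof of Proposition \ref{Cox} almost verbatim, replacing marks throughout by dual marks and $R^s,R^l$ by $R^{s\vee},R^{l\vee}$. First I would use the explicit coordinate description of $\Lambda^s_M$ recorded just above, so that computing $|\Lambda^s_M|$ amounts to counting the admissible integer tuples $(t^s_0,t^s_1,\dots,t^s_n)$ with $t^s_0+t^s_1 m^\vee_1+\dots+t^s_n m^\vee_n=M$, where $t^s_i\in\N$ for $r_i\in R^{s\vee}$ and $t^s_i\in\Z^{\geq 0}$ for $r_i\in R^{l\vee}$.

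Next, take arbitrary $t_0,\dots,t_n\in\Z^{\geq 0}$ and substitute $t^s_i=1+t_i$ whenever $r_i\in R^{s\vee}$ and $t^s_i=t_i$ whenever $r_i\in R^{l\vee}$. Because $R^{s\vee}=\set{r_\al}{\al\in\Delta_s}\cup\{r^\vee_0\}$ and $m^\vee_0=1$, the total amount subtracted is $\sum_{i:\,r_i\in R^{s\vee}}m^\vee_i=\sum_{\al_i\in\Delta_s}m^\vee_i+1=m^{s\vee}$, so the defining constraint becomes
\[ t_0+t_1 m^\vee_1+\dots+t_n m^\vee_n=M-m^{s\vee},\qquad t_0,\dots,t_n\in\Z^{\geq 0}. \]
This equation has no solution if $M<m^{s\vee}$, the unique solution $[0,\dots,0]$ if $M=m^{s\vee}$, and for $M>m^{s\vee}$ it is precisely the relation defining the dual grid $\Lambda_{M-m^{s\vee}}$. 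The argument for $\Lambda^l_M$ is word for word the same, with $R^{s\vee}$ and $m^{s\vee}$ replaced by $R^{l\vee}$ and $m^{l\vee}$.

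The one ingredient not already present in the proof of Proposition \ref{Cox} — and the step I expect to carry the only real content — is the identification $|\Lambda_N|=|F_N|$ for $N=M-m^{s\vee}$ (resp.\ $N=M-m^{l\vee}$): the substitution naturally produces the dual-mark count, whereas \eqref{numFsFl} and the desired \eqref{numLsLl} are phrased in terms of $|F_N|$. This equality can be taken directly from \cite{HP}, where it underlies the completeness of the discrete transform; alternatively it is elementary here, since $|\Lambda_N|$ and $|F_N|$ count the nonnegative integer solutions of $\sum_{i=0}^n t_i m^\vee_i=N$ and $\sum_{i=0}^n u_i m_i=N$ respectively, and for each of $B_n$, $C_n$, $G_2$, $F_4$ the multiset of dual marks $\{m^\vee_0,\dots,m^\vee_n\}$ equals the multiset of marks $\{m_0,\dots,m_n\}$ (compare Table 1 of \cite{HP}), so relabelling indices by a permutation carrying one multiset to the other is a bijection between the two solution sets. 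Together with the reduction above, this yields \eqref{numLsLl}.
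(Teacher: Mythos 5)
Your proof is correct and is essentially the argument the paper intends: the paper omits the proof of Proposition \ref{Cox2}, saying only ``similarly to Proposition \ref{Cox}'', and your substitution $t^s_i=1+t_i$ for $r_i\in R^{s\vee}$ is exactly that parallel argument. You rightly flag the one extra ingredient, $|\Lambda_N|=|F_N|$, which the paper indeed takes from \cite{HP} (and invokes explicitly in Corollary \ref{FL}); your multiset-of-marks justification of it is also valid for $B_n$, $C_n$, $G_2$, $F_4$.
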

Combining Propositions \ref{Cox}, \ref{Cox2} and \ref{msml} and taking into account that $|F_M|=|\Lambda_M|$ we conclude with the following crucial result.
\begin{cor}\label{FL}
For the numbers of elements of the sets $\Lambda^s_M$ and $\Lambda^l_M$ it holds that
\begin{equation}\label{complete}
\begin{alignedat}{2}
|\Lambda^s_M |&=|F^s_M|, \\
|\Lambda^l_M |&=|F^l_M| .
\end{alignedat}
\end{equation}
\end{cor}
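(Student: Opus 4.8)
The plan is to obtain the corollary simply by comparing the piecewise formulas that have already been established. First I would invoke Proposition~\ref{Cox}, which expresses $|F^s_M|$ as $0$, $1$, or $|F_{M-m^s}|$ in the three regimes $M<m^s$, $M=m^s$, $M>m^s$; and Proposition~\ref{Cox2}, which expresses $|\Lambda^s_M|$ as $0$, $1$, or $|F_{M-m^{s\vee}}|$ in the regimes $M<m^{s\vee}$, $M=m^{s\vee}$, $M>m^{s\vee}$. Then I would substitute the identity $m^s=m^{s\vee}$ from Proposition~\ref{msml}: this makes the three case splits literally coincide, and in the only nontrivial branch, $M>m^s=m^{s\vee}$, it identifies $|F_{M-m^s}|$ with $|F_{M-m^{s\vee}}|$. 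Hence $|\Lambda^s_M|=|F^s_M|$ for every $M\in\N$.

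For the long functions the argument is verbatim the same: Propositions~\ref{Cox} and~\ref{Cox2} supply the analogous three-way expressions for $|F^l_M|$ and $|\Lambda^l_M|$ with $m^l$, $m^{l\vee}$ in place of $m^s$, $m^{s\vee}$, and Proposition~\ref{msml} provides $m^l=m^{l\vee}$, so the two numbers agree. The auxiliary fact $|F_M|=|\Lambda_M|$ recorded in \cite{HP} is what allows the $M>m^{s\vee}$ (resp.\ $M>m^{l\vee}$) branch of Proposition~\ref{Cox2} to be stated directly in terms of $F$-grids rather than $\Lambda$-grids, so once that is in hand no further computation is needed.

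I do not anticipate a genuine obstacle here: all of the substance has been moved upstream. The combinatorial reduction of the shifted sampling regions $F^s_M$, $F^l_M$, $\Lambda^s_M$, $\Lambda^l_M$ to ordinary grids $F_{M-m^s}$, $F_{M-m^l}$ is exactly the content of Propositions~\ref{Cox} and \ref{Cox2}, and the genuinely nontrivial numerical coincidence $m^s=m^{s\vee}$, $m^l=m^{l\vee}$ is the content of Proposition~\ref{msml} (established case by case via Table~\ref{tabdec}). The corollary is then a direct comparison. The only thing to watch is bookkeeping: one should check that the strict and non-strict inequalities delimiting the three cases for $|F^s_M|$ and for $|\Lambda^s_M|$ line up on the same side after the substitution $m^s=m^{s\vee}$, which is immediate.
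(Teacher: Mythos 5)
Your proposal is correct and matches the paper's own (one-line) derivation exactly: the corollary is obtained by combining Propositions~\ref{Cox} and~\ref{Cox2} with the equalities $m^s=m^{s\vee}$, $m^l=m^{l\vee}$ of Proposition~\ref{msml}, using $|F_M|=|\Lambda_M|$ to put both case formulas in terms of the grids $F_N$. Nothing is missing.
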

\begin{example}\label{exdual}
For the Lie algebra $C_2$ we have  $\abs{P/4Q}=32$ and according to Theorem \ref{numAn} and Corollary~\ref{FL} we have $$|\Lambda^s_4(C_2)|=|\Lambda^l_4(C_2)|=4. $$
The cosets representants of $P/4Q$ together with the grids of weights $\Lambda^s_4(C_2)$ and $\Lambda^l_4(C_2)$ are depicted in Figure~\ref{figC2d}.
\begin{figure}
\resizebox{11cm}{!}{\input{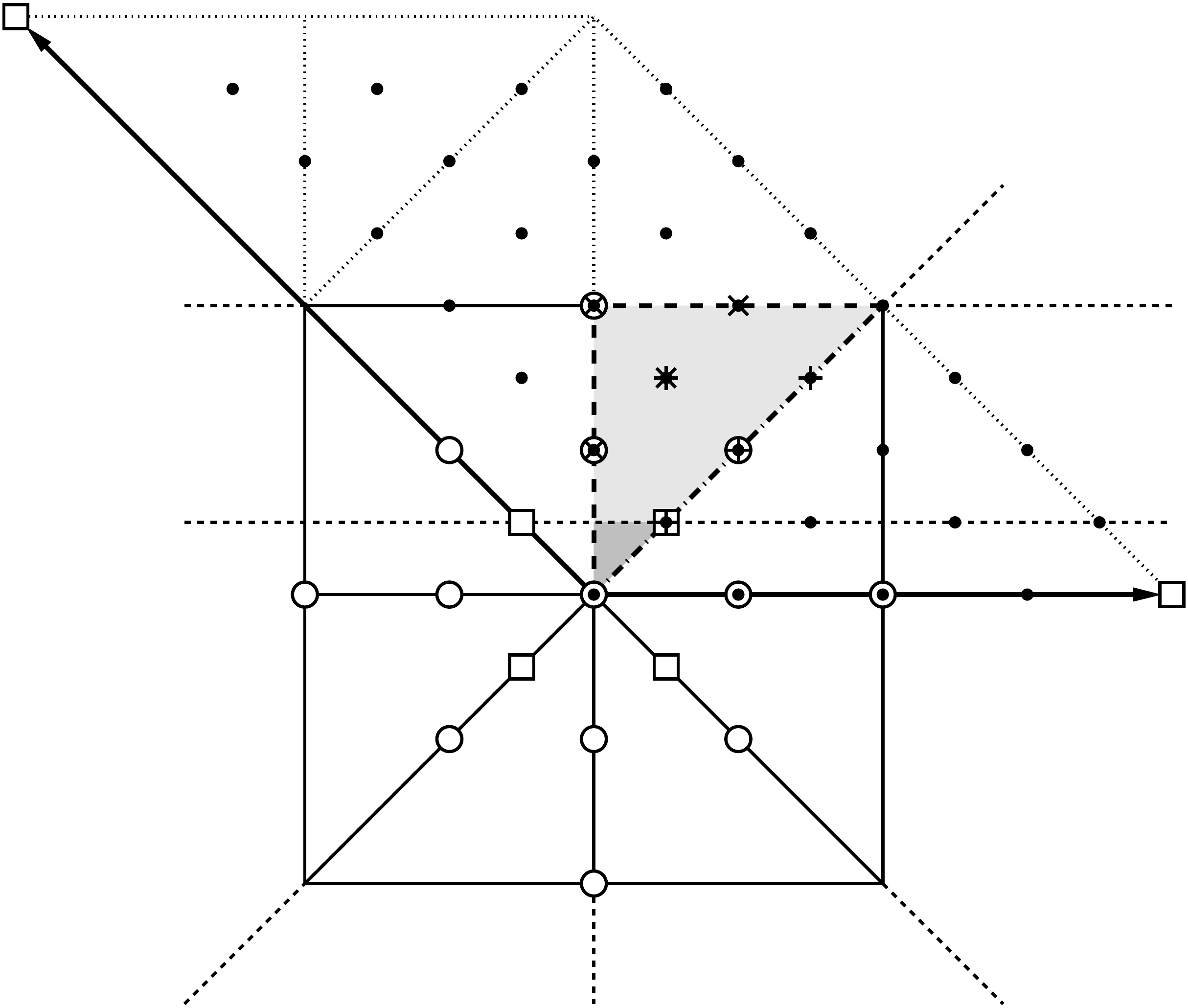_t}}
\caption{ The grids of weights $\Lambda^s_4(C_2)$ and $\Lambda^l_4(C_2)$ of $C_2$. The darker grey triangle is the fundamental domain $F^{\vee}$ and the lighter grey triangle is the domain $4F^{\vee}$. The borders $4H^{s\vee}$ and $4H^{l\vee}$ are depicted as the thick dashed lines and dot-and-dashed lines, respectively. The cosets representants of $P/4Q$ of $C_2$ are shown as $32$ black dots. The four representants belonging to $\Lambda^s_4(C_2)$ and $\Lambda^l_4(C_2)$ are crossed with '$+$' and '$\times$', respectively. The dashed lines represent dual 'mirrors' $r^\vee_0,r_1$, $r_2$ and the affine mirror $r^\vee_{0,4}$ is defined by $r^\vee_{0,4}\la= 4r^\vee_0(\la/4)$. The circles and squares coincide with those in Figure \ref{figC2}. }\label{figC2d}
\end{figure}
\end{example}

\section{Discrete orthogonality and transforms of $S^s-$ and $S^l-$ functions}

\subsection{Discrete orthogonality of $S^s-$ and $S^l-$functions}\

To describe the discrete orthogonality of the $S^l-$ and $S^s-$functions, we use the ideas discussed in \cite{MP2} and reformulated in \cite{HP}. Recall that basic orthogonality relations from \cite{HP,MP2} are, for any $\la,\la' \in P/MQ$, of the form:
\begin{equation}\label{bdis}
 \sum_{y\in\frac{1}{M}P^{\vee}/Q^{\vee}} e^{2\pi i\sca{\la-\la'}{y}}=cM^n\delta_{\la,\la'}.
\end{equation}
We define the scalar product of two functions $f,g:F_M^s\map \Com$ or $f,g:F_M^l\map \Com$ by
\begin{equation} \label{scp}
\sca{f}{g}_{F_M^s}= \sum_{x\in F_M^s}\ep(x) f(x)\overline{g(x)},\q \sca{f}{g}_{F_M^l}= \sum_{x\in F_M^l}\ep(x) f(x)\overline{g(x)},
\end{equation}
where the numbers $\ep (x)$ are determined by (\ref{ep}). We show that $\Lambda_M^s$ and $\Lambda_M^l$ are the lowest maximal sets of pairwise orthogonal $S^s-$ and $S^l-$functions.
\begin{thm}
For $\la,\la' \in\Lambda_M^s$ it holds that
\begin{equation}\label{ortho}
 \sca{\phi^s_\la}{\phi^s_{\la'}}_{F_M^s}=c\abs{W}M^n h^{\vee}_\la \delta_{\la,\la'}
\end{equation}
and for $\la,\la' \in\Lambda_M^l$ it holds that
\begin{equation}\label{orthol}
 \sca{\phi^l_\la}{\phi^l_{\la'}}_{F_M^l}=c\abs{W}M^n h^{\vee}_\la \delta_{\la,\la'},
\end{equation}
where $c$, $h^{\vee}_\la$ were defined by (\ref{Center}), (\ref{hla}), 
respectively, $|W|$ is the number of elements of the Weyl group $W$ and $n$ is the rank of $G$.
\end{thm}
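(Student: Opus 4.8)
The plan is to reduce the orthogonality of $S^s$- and $S^l$-functions over the small fundamental domains $F_M^s$, $F_M^l$ to the known orthogonality relation \eqref{bdis} over the full torus $\frac1M P^\vee/Q^\vee$, exactly as was done in \cite{HP,MP2} for $C$- and $S$-functions. I will treat the $S^s$-case in detail; the $S^l$-case is entirely analogous with $\sigma^s$ replaced by $\sigma^l$ and $R^{s\vee}$ by $R^{l\vee}$.

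First I would expand the scalar product $\sca{\phi^s_\la}{\phi^s_{\la'}}_{F_M^s}$ using the definition \eqref{scp} of the inner product, insert the orbit-sum definition \eqref{genorbs} of $\phi^s_\la$ and $\overline{\phi^s_{\la'}}$, and use the antisymmetry \eqref{Sssym}, $\phi^s_\la(wx)=\sigma^s(w)\phi^s_\la(x)$, to convert the sum $\sum_{x\in F_M^s}\ep(x)(\cdots)$ into a sum over the whole orbit space. The key bookkeeping point is that a point $x\in F_M^s$ has orbit of size $\ep(x)=|W|/h_x$, and since $\phi^s_\la$ is supported (among boundary classes) only where $\sigma^s\circ\psi(\mathrm{Stab}_{W^{\mathrm{aff}}})=\{1\}$, the stabilizer contributions combine so that $\sum_{x\in F_M^s}\ep(x)f(wx)$ telescopes to $\sum_{y\in\frac1M P^\vee/Q^\vee}f(y)$ for any $W$-invariant $f$ built from $\phi^s_\la\overline{\phi^s_{\la'}}$; here one uses \eqref{WFM} together with property \eqref{rfun1}–\eqref{rfunstab} of the $W$-action on the torus. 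This step delivers
\[
\sca{\phi^s_\la}{\phi^s_{\la'}}_{F_M^s}=\sum_{w,w'\in W}\sigma^s(w)\sigma^s(w')\sum_{y\in\frac1M P^\vee/Q^\vee} e^{2\pi i\sca{w\la-w'\la'}{y}}.
\]

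Next I would apply \eqref{bdis} to the inner sum, which forces $w\la\equiv w'\la'\pmod{MQ}$, i.e. the contribution is $cM^n$ whenever $w\la$ and $w'\la'$ represent the same class in $\R^n/MQ$, and zero otherwise. Summing the resulting $\sigma^s(w)\sigma^s(w')$ over such pairs: for $\la,\la'\in\Lambda_M^s$, property \eqref{lrfun2} (no two distinct elements of $\Lambda_M$, hence of $\Lambda_M^s$, are $W$-related on the torus $\R^n/MQ$) shows that $w\la\equiv w'\la'$ can happen only if $\la'=\la$, giving $\delta_{\la,\la'}$. When $\la=\la'$, the pairs $(w,w')$ with $w\la\equiv w'\la\pmod{MQ}$ are exactly those with $w^{-1}w'\in\mathrm{Stab}^\vee(\la)$; there are $|W|\cdot h^\vee_\la$ of them, and on each such pair $\sigma^s(w)\sigma^s(w')=\sigma^s(w^{-1}w')$. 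The final ingredient is that $\sigma^s$ restricted to $\mathrm{Stab}^\vee(\la)$ is identically $1$ for $\la\in\Lambda_M^s$: indeed $\Lambda_M^s=P/MQ\cap MF^{s\vee}$ was defined precisely so that, via \eqref{rfunstab2}, $\mathrm{Stab}^\vee(\la)\cong\mathrm{Stab}_{\widehat W^{\mathrm{aff}}}(b/M)$ is generated by dual reflections $r^\vee_i\in R^{l\vee}$ only, and by \eqref{rsd} those satisfy $\sigma^s\circ\widehat\psi(r^\vee_i)=1$. Hence each of the $|W|h^\vee_\la$ pairs contributes $+1$, and $\sum_{w,w'}\sigma^s(w)\sigma^s(w')cM^n = c|W|M^n h^\vee_\la$, proving \eqref{ortho}.

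The step I expect to be the main obstacle is the first one: rigorously justifying that the weighted sum over the asymmetric domain $F_M^s$ (with its $\ep(x)$ weights and with boundary classes in $H^s$ excluded) reassembles into the unweighted sum over the full quotient $\frac1M P^\vee/Q^\vee$. One must check that the $\phi^s$-factors genuinely vanish on all the discarded boundary orbits — this is \eqref{Fss} combined with the observation, used already in Section 4, that $\phi^s_\la$ also vanishes at $u\in F_M^s$ when $\la\in MH^{s\vee}\cap\Lambda_M$, so that extending the $\la$-parameter set from $\Lambda_M^s$ to $\Lambda_M$ (or to $P/MQ$) adds only zero terms — and that the multiplicities $h_x$ on the torus match the orders of the affine stabilizers via \eqref{rfunstab}. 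Once this identification is in place, the remainder is the short character-sum computation above. For the $S^l$-functions one repeats verbatim, using \eqref{bound2l}, \eqref{Fls}, the set $\Lambda_M^l=P/MQ\cap MF^{l\vee}$, and the fact that $\sigma^l\equiv 1$ on stabilizers generated by $R^{s\vee}$-reflections, again by \eqref{rsd}.
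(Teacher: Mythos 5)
Your proposal is correct and follows essentially the same route as the paper: extend the weighted sum over $F_M^s$ to the full torus $\tfrac1M P^\vee/Q^\vee$ using the vanishing of $\phi^s_\la$ on $H^s$ together with \eqref{rfun2}, \eqref{WFM} and the $W$-invariance of $\phi^s_\la\overline{\phi^s_{\la'}}$, apply \eqref{bdis}, reduce via \eqref{lrfun2} to $w'\in\mathrm{Stab}^\vee(\la)$, and conclude with $\sigma^s\equiv 1$ on that stabilizer via \eqref{rfunstab2}, \eqref{FsFldual} and \eqref{rsd}. The only cosmetic difference is that you evaluate the double sum over $(w,w')$ by counting pairs with $w^{-1}w'\in\mathrm{Stab}^\vee(\la)$, whereas the paper first collapses it to $|W|$ times a single sum; the two computations are identical.
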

\begin{proof}
Since $\phi^s_\lambda$ vanishes on $F_M\setminus F^s_M$, we have
\begin{equation*}
\sca{\phi^s_\la}{\phi^s_{\la'}}_{F_M^s}=\sum_{x\in F^s_M}\ep(x)\phi^s_\la(x)\overline{\phi^s_{\la'}(x)}=\sum_{x\in F_M}\ep(x)\phi^s_\la(x)\overline{\phi^s_{\la'}(x)}.
\end{equation*}
The equality
\begin{equation*}
\sum_{x\in F_M} \ep(x) \phi^s_\la(x)\overline{\phi^s_{\la'}(x)}= \sum_{y \in \frac{1}{M}P^{\vee}/Q^{\vee}} \phi^s_\la(y)\overline{\phi^s_{\la'}(y)}
\end{equation*}
follows from (\ref{rfun2}) and (\ref{WFM}) and the $W-$invariance of the expression $\phi^s_\la(x)\overline{\phi^s_{\la'}(x)}$. Then, using the $W-$invariance of $\frac{1}{M}P^{\vee}/Q^{\vee}$ and \eqref{bdis}, we have
\begin{align*}
\sca{\phi^s_\la}{\phi^s_{\la'}}_{F_M^s} = &\sum_{w'\in W}\sum_{w\in W} \sum_{y\in \frac{1}{M}P^{\vee}/Q^{\vee}}\sigma^s(ww')e^{2\pi\i\sca{w\la-w'\la'}{y}}=\abs{W}\sum_{w'\in W}\sum_{y \in \frac{1}{M}P^{\vee}/Q^{\vee}}\sigma^s(w')e^{2\pi\i\sca{\la-w'\la'}{y}}\\ = & c\abs{W}M^n \sum_{w'\in W} \sigma^s(w')\delta_{w'\la',\la}.
\end{align*}
If $\la=w'\la'$, then we have from (\ref{lrfun2}) that $w'\la=\la=\la'$, i.e. $w'\in\mathrm{Stab}^\vee (\la)$.
Any $\la\in \Lambda_M^s$ is of the form $\la=b+MQ$ with $b\in MF^{s\vee}$. Then, considering \eqref{rfunstab2} and \eqref{FsFldual}, we have
$$\sigma^s (\mathrm{Stab}^\vee (\la))= \sigma^s \circ \widehat\psi \left(\mathrm{Stab}_{\widehat{W}^{\mathrm{aff}}}(b/M)\right)=\{1\} $$
i.e. we obtain $\sigma^s(w')=1$ for any $w'\in\mathrm{Stab}^\vee (\la)$, and consequently
$$\sum_{w'\in W} \sigma^s(w')\delta_{w'\la',\la}=\sum_{w'\in W}\delta_{w'\la',\la}=h^\vee_\la\delta_{\la',\la}.$$
The case of $S^l-$functions is similar.
\end{proof}

\begin{example}
The highest root $\xi$ and the highest dual root $\eta$ of $C_2$ are given by the formulas
$$\xi=2\al_1+\al_2,\ \ \eta=\al^\vee_1+2\al^\vee_2.$$
The Weyl group of $C_2$ has eight elements, $|W|=8$, and we calculate the determinant of the Cartan matrix 
$c=2$.
For a parameter with coordinates in $\omega-$basis $(a,b)$ and for a point with coordinates in $\alpha^\vee$-basis $(x,y)$, we have the following explicit form of $S^{s}-$ and $S^{l}-$functions of $C_2$:
\begin{align*}
\phi^s_{(a,b)}(x,y)= & 2\{ \cos(2\pi((a+2b)x-by))+\cos(2\pi(ax+by)) \\ &-\cos(2\pi((a+2b)x-(a+b)y))-\cos(2\pi(ax-(a+b)y))\}\\
\phi^l_{(a,b)}(x,y)= & 2\{ -\cos(2\pi((a+2b)x-by))+\cos(2\pi(ax+by)) \\ &-\cos(2\pi((a+2b)x-(a+b)y))+\cos(2\pi(ax-(a+b)y))\}.
\end{align*}
The grids $F^s_M$ and $F^l_M$ are given by
\begin{align*}
F^s_M(C_2) =& \setb{\frac{u^s_1}{M}\om^{\vee}_1+\frac{u^s_2}{M}\om^{\vee}_2}{u^s_0,\,u^s_2\in \Z^{\geq 0},\, u^s_1\in\N,\, u^s_0+2u^s_1+u^s_2=M} \\
F^l_M(C_2) =& \setb{\frac{u^l_1}{M}\om^{\vee}_1+\frac{u^l_2}{M}\om^{\vee}_2}{u^l_0,\,u^l_2\in \N,\, u^l_1\in\Z^{\geq 0},\, u^l_0+2u^l_1+u^l_2=M}
\end{align*}
and the grids of weights $\Lambda^s_M$ and $\Lambda^l_M$ are determined by
\begin{align*}
\Lambda^s_M(C_2) =& \setb{t^s_1\om_1+t^s_2\om_2}{t^s_0,t^s_1\in\N,t^s_2\in \Z^{\geq 0},\, t^s_0+t^s_1+2t^s_2=M}\\
\Lambda^l_M(C_2) =& \setb{t^l_1\om_1+t^l_2\om_2}{t^l_0,t^l_1\in\Z^{\geq0},t^l_2\in \N,\, t^l_0+t^l_1+2t^l_2=M}.
\end{align*}

The discrete orthogonality relations of $S^s-$ and $S^l-$ functions of $C_2$, which hold for any two functions $\phi^s_\la$, $\phi^s_{\la'}$ labeled by $\la,\la'\in\Lambda^s_M(C_2)$, and $\phi^l_\la$, $\phi^l_{\la'}$ labeled by $\la,\la'\in\Lambda^l_M(C_2)$, are of the form (\ref{ortho}) and (\ref{orthol}), respectively. The calculation procedure of the coefficients $\ep(x)$, $h_\lambda ^{\vee}$, which appear in (\ref{scp}),(\ref{ortho}) and (\ref{orthol}), is detailed in \S 3.7 in \cite{HP}. The values of the coefficients $\ep(x)$, $h_\lambda ^{\vee}$ for $x\in F^s_M(C_2) $, $\la\in \Lambda^s_M(C_2)$  and for $x\in F^l_M(C_2) $, $\la\in \Lambda^l_M(C_2)$ are listed in Table \ref{F4}. We represent each point $x\in F^s_M(C_2)$ and each weight $\lambda^s\in \Lambda_M(C_2)$ by the coordinates $[u^s_0,u^s_1,u^s_2]$ and $[t^s_0,t^s_1,t^s_2]$. Similarly, we represent each point $x\in F^l_M(C_2)$ and each weight $\lambda^l\in \Lambda_M(C_2)$ by the coordinates $[u^l_0,u^l_1,u^l_2]$ and $[t^l_0,t^l_1,t^l_2]$.
\begin{table}[ht]
\begin{tabular}{c|c}
$x\in F^s_M(C_2)$ & $\ep (x)$  \\ \hline
$[u^s_0,u^s_1,u^s_2]$ & $8$  \\
$[0,u^s_1,u^s_2]$ & $4$   \\
$[u^s_0,u^s_1,0]$ & $4$   \\
$[0,u^s_1,0]$ & $2$  \\
\multicolumn{1}{c}{}&\\
$x\in F^l_M(C_2)$ & $\ep (x)$  \\ \hline
$[u^l_0,u^l_1,u^l_2]$ & $8$  \\
$[u^l_0,0,u^l_2]$ & $4$   \\
\multicolumn{1}{c}{}&\\
\multicolumn{1}{c}{}&\\
\end{tabular}\hspace{24pt}
\begin{tabular}{c|c}
$\lambda\in \Lambda^s_M(C_2)$  & $h_\lambda ^{\vee}$ \\ \hline
$[t^s_0,t^s_1,t^s_2]$ & $1$  \\
$[t^s_0,t^s_1,0]$ & $2$   \\
\multicolumn{1}{c}{}&\\
\multicolumn{1}{c}{}&\\
\multicolumn{1}{c}{}&\\
$\lambda\in \Lambda^l_M(C_2)$  & $h_\lambda ^{\vee}$ \\ \hline
$[t^l_0,t^l_1,t^l_2]$ & $1$  \\
$[0,t^l_1,t^l_2]$ & $2$   \\
$[t^l_0,0,t^l_2]$ & $2$  \\
$[0,0,t^l_2]$ & $8$  \\
\end{tabular}\\
\medskip
\caption{The coefficients $\ep(x)$ and $h_\lambda ^{\vee}$ of $C_2$. All variables $u^s_0,u^s_1,u^s_2, t^s_0,t^s_1,t^s_2$ and $u^l_0,u^l_1,u^l_2, t^l_0,t^l_1,t^l_2$ are assumed to be natural numbers.}\label{F4}
\end{table}
\end{example}

\subsection{Discrete $S^s-$ and $S^l-$transforms}\

Analogously to ordinary Fourier analysis, we define interpolating functions $I^s_M$ and $I^l_M$
\begin{align}
I^{s}_M(x):= \sum_{\la\in \Lambda_M^s} c^s_\la \phi^s_\la(x),\q\q I^{l}_M(x):= \sum_{\la\in \Lambda_M^l} c^l_\la \phi^l_\la(x),\q x\in \R^n \label{intc}
\end{align}
which are given in terms of expansion functions $\phi^s_\la$ and $\phi^l_\la$ and expansion coefficients $c^s_\la$, $c^l_\la$, whose values need to be determined. These interpolating functions can also be understood as finite cut-offs of infinite expansions.
Suppose we have some function $f$ sampled on the grid $F_M^s$ or $F_M^l$. The interpolation of $f$ consists in finding the coefficients $c^s_\la$ or $c^l_\la$ in the interpolating functions (\ref{intc}) such that
\begin{align}
\begin{alignedat}{2}\label{intcs}
I^s_M(x)=& f(x), \q x\in F_M^s \\
I^l_M(x)=& f(x), \q x\in F_M^l
\end{alignedat}
\end{align}
Relations \eqref{complete}
and \eqref{ortho}, \eqref{orthol} allow us to view the values $\phi^s_\la(x)$ with $x\in F_M^s$, $\la\in \Lambda_M^s$ and the values $\phi^l_\la(x)$ with $x\in F_M^l$, $\la\in \Lambda_M^l$ as elements of non-singular square matrices.
These invertible matrices coincide with the matrices of the linear systems (\ref{intcs}). Thus, the coefficients $c^s_\la$ and $c^l_\la$ can be uniquely determined. The formulas for calculation of $c^s_\la$ and $c^l_\la$, which we call discrete $S^s-$ and $S^l-$transforms, can obtained by means of calculation of standard Fourier coefficients
\begin{align}\label{trans}
\begin{alignedat}{2}
c^s_\la=& \frac{\sca{f}{\phi^s_\la}_{F_M^s}}{\sca{\phi^s_\la}{\phi^s_\la}_{F_M^s}}=(c\abs{W} M^n h^{\vee}_\la)^{-1}\sum_{x\in F_M^s}\ep(x) f(x)\overline{\phi^s_\la(x)}\\
c^l_\la=& \frac{\sca{f}{\phi^l_\la}_{F_M^l}}{\sca{\phi^l_\la}{\phi^l_\la}_{F_M^l}}=(c\abs{W} M^nh^{\vee}_\la)^{-1}\sum_{x\in F_M^l}\ep(x) f(x)\overline{\phi^l_\la(x)}
\end{alignedat}
\end{align}
and the corresponding Plancherel formulas also hold
\begin{align*}
\sum_{x\in F^s_M} \ep(x)\abs{f(x)}^2 = & c \abs{W} M^n \sum_{\la\in\Lambda^s_M}h^{\vee}_\la|c^s_\la|^2 \\
\sum_{x\in F^l_M} \ep(x)\abs{f(x)}^2 = & c \abs{W} M^n \sum_{\la\in\Lambda^l_M}h^{\vee}_\la|c^l_\la|^2.
\end{align*}

\section{Concluding Remarks}
\begin{itemize}
\item In view of the ever-increasing amount of digital data, practically the most valuable property of the orbit functions of $C-$, $S-$, $S^l-$ and $S^s-$families is their discrete orthogonality. The four families are distinguished most notably by their behavior at the boundary of $F$. The functions of $S^l-$ and $S^s-$families do not have an analog in one variable, i.e. rank 1 simple Lie group.

\item
The product of two $S^s-$functions or two $S^l-$functions with the same underlying Lie group and the same arguments $x\in\R^n$ but different dominant weights, say $\lambda$ and $\lambda'$, decomposes into the sum of $C-$functions:
\begin{equation*}
\phi^s_\la(x)\cdot \phi^s_{\la'}(x)=\sum_{w\in W}\sigma^s(w) \Phi_{\la+w\la'}(x),\q \phi^l_\la(x)\cdot \phi^l_{\la'}(x)=\sum_{w\in W}\sigma^l(w) \Phi_{\la+w\la'}(x).
\end{equation*}
where $\Phi_\la$ denotes the (normalized) $C-$function $\Phi_\la=\phi_\la^\id$.

\item
The present work raises the question under which conditions converge the functional series $\{I^s_M\}_{M=1}^\infty$, $\{I^l_M\}_{M=1}^\infty$ assigned to a function $f:F\map\Com$ by the relations (\ref{intc}), (\ref{trans}).

\item In addition to the $C-$ and $S-$functions, which are multidimensional generalizations of common cosine and sine functions, the $E-$functions generalizing the exponential functions \cite{KP3} is also defined \cite{P}. The $E-$functions also admit discrete orthogonality \cite{HP2,MP2}. For these 'standard' $E-$functions, the kernel of the homomorphism $\sigma^e$, given by \eqref{parity}, is crucial. It turns out that there are altogether six types of $E-$functions once the kernels of the sign homomorphisms $\sigma^s$ and $\sigma^l$ are included in the definition. So far, these six types have been studied in full detail only for rank 2 Lie groups \cite{HaHrPa}.

\item A general one-to-one link between the orbit functions and orthogonal polynomials in $n$ variables was pointed out in \cite{NPT}. Extensive literature exists about orthogonal polynomials, although most of it pertains to 2-variable polynomials. It cannot be assumed that our Lie group defined polynomials of rank two were not taken into account. For a greater number of variables, not all of the polynomials defined from the simple Lie groups have been noticed. Discrete orthogonality of the polynomials in more than one variable is outside the scope of traditional approaches.

\end{itemize}

\section*{Acknowledgments}

We gratefully acknowledge the support of this work by the Natural Sciences and Engineering Research Council of Canada and by the Doppler Institute of the Czech Technical University in Prague. JH is grateful for the hospitality extended to him at the Centre de recherches math\'ematiques, Universit\'e de Montr\'eal. JH gratefully acknowledges support by the Ministry of Education of Czech Republic (project MSM6840770039). JP expresses his gratitude for the hospitality of the Doppler Institute.

\end{document}